\newtheorem{theorem}{Theorem}{}
{}
{}
\newtheorem{proposition}{Proposition}{}
\newtheorem{lemma}{Lemma}{}
\begin{document}


\title{Weighted Sum-Throughput Maximization for Energy Harvesting Powered MIMO Multi-Access Channels}

\author{\au{Zheng Nan$^{1\corr}$}, \au{Wenming Li$^{2}$}}

\address{\add{1}{RF Tech. R\&D Laboratory, AVIC Beijing Keveen Aviation Instrument Co., Ltd., 43 North 3rd Ring Road West, Beijing, 100086, P.R.China}
\add{2}{SPD Bank, 1688 Lianhua Road, Shanghai, 200233, P.R.China}
\email{znan12@fudan.edu.cn}}

\begin{abstract}
This paper develops a novel approach to obtaining the optimal scheduling strategy in a multi-input multi-output
(MIMO) multi-access channel (MAC), where each transmitter is powered by an individual energy harvesting process. Relying on the state-of-the-art convex optimization tools, the proposed approach provides a low-complexity block coordinate ascent algorithm to obtain the optimal transmission policy that maximizes the weighted sum-throughput for MIMO MAC. The proposed approach can provide the optimal benchmarks for all practical schemes in energy-harvesting powered MIMO MAC transmissions. Based on the revealed structure of the optimal policy, we also propose an efficient online scheme, which requires only causal knowledge of energy arrival realizations. Numerical results are provided to demonstrate the merits of the proposed novel scheme.
\end{abstract}

\maketitle

\section{Introduction}\label{sec1}

Energy harvesting is an effective solution for prolonging the operating lifetime of self-sustainable wireless networks, and it has attracted growing research interest in recent years \cite{Lei09, Gatzianas10}. The wireless terminals with embedded energy harvesting devices and rechargeable batteries are able to harvest renewable energy from environmental sources such as solar and wind \cite{Kansal07, ChenMag15}. The emergence of a new technique, known as simultaneous wireless information and power transfer (SWIPT), even makes it possible for wireless terminals to harvest energy from the ambient radio-frequency (RF) signals \cite{Fang15, Feng15}.
Different from traditional communication systems, due to the intermittent nature of most renewable energy sources, an energy availability constraint is imposed such that the energy accumulatively consumed up to any time cannot exceed what has been accumulatively harvested so far. With this new type of constraints taken into consideration, the optimal transmission polices for an energy harvesting node in time-invariant point-to-point channels were derived in \cite{Yan12, Ho12, Sharma10} without battery-capacity constraint, and in \cite{Tut12} with a finite battery capacity constraint.
A directional water-filling approach was developed to obtain the optimal packet transmission strategy over time-varying fading channels in \cite{Oze11}, while a dynamic string tautening algorithm was proposed to generate the most energy-efficient schedule for delay-limited traffic of transmitters with non-negligible circuit power in \cite{Chen16}. Based on the optimal transmission policies, efficient online transmission schemes were also developed in \cite{Chen16, Oze11}.
A unified approach to obtaining the optimal transmission schedules for both time-invariant and time-varying fading channels was put forth in \cite{SECON14}.
Generalizing the approaches for point-to-point channels, optimal transmission policies for the energy harvesting powered multi-input multi-output (MIMO) broadcast channels were addressed in \cite{Ant11, Yang12, Oze12, Wang14}.


Different from the point-to-point or broadcasting transmission with a single energy harvesting powered transmitter, there are multiple transmitters powered by individual energy harvesting processes for a multi-access channel (MAC).
These multiple energy harvesting processes can have coupled effect on the users' sum-throughput as well as the transmission strategies; hence, optimization approaches for the cases with single energy harvesting node cannot apply any more. As a result, design and analysis of the optimal scheduling policy for energy-harvesting powered MACs is challenging, and the existing research on this critical issue is in a rather primitive stage. Among the limited number of works concerning the scheduling of energy harvesting powered MACs, \cite{Tutuncuoglu12, Ya12} simply investigated the optimal transmission polices for the {\it two-user single-antenna} MACs.


In this paper, we explore the optimal scheduling for the general energy harvesting powered multi-antenna (i.e., MIMO) MACs. Assuming that full harvested energy and channel information is available, we obtain the optimal (offline) transmission policy that maximizes the sum-throughput of multiple users. Relying on a ``nested optimization'' method \cite{Oze12, Xu13}, we show that the optimal MIMO MAC scheduling problem can be converted into a convex power allocation problem. An iterative block coordinate ascent algorithm is then developed to find the optimal power allocation. Specifically, to bypass the coupling effect resulting from the multiple energy harvesting processes, we compute the optimal power allocation for one user with all other users' powers being fixed per iteration. The problem corresponding to each iteration is similar to finding the optimal power scheduling for an equivalent ``point-to-point'' link between the selected user and the access point, for which the ``string-tautening'' algorithm in \cite{Wang14} can be applied to obtain the solution with a linear computational complexity. By ascending the sum-throughput through optimizing each user's power allocation in a sequential way, the proposed approach is guaranteed to converge to the globally optimal power allocation solution, and consequently the globally optimal MIMO MAC strategy. The proposed approach can provide the optimal benchmarks for practical transmission schemes over energy harvesting powered MIMO MACs.
The revealed structures of the optimal policy are also used to develop an online scheduling scheme which requires only causal knowledge of the harvested energy realizations.

The rest of the paper is organized as follows. Section II describes the system models. Section III presents the proposed novel approach to optimal scheduling for the energy harvesting powered MIMO MAC. Section IV evaluates the proposed scheme with numerical examples. We conclude the paper in Section V.

\section{Notations}
\subsection{Indices, Numbers, and Sets}
\begin{tabular}{p{0.12\columnwidth} p{0.76\columnwidth}}
$K$, $k$ & Number and index of transmitters (users). \\
$N$ & Number of epochs. \\
$i$, $n$ & Indices of epochs. \\
$N_t$ & Number of transmit antennas of each user. \\
$N_r$ & Number of receive antennas of the access point. \\
$q$ & Iteration index of the block coordinate ascent algorithm. \\
$\mathcal{C}_{\text{MAC}}$ & Capacity region of MAC. \\
\end{tabular}

\subsection{Constants}
\begin{tabular}{p{0.12\columnwidth} p{0.76\columnwidth}}
$T$ & Instant that transmission terminates. \\
$\boldsymbol{I}$ & Identity covariance matrix. \\
$E_{\max, k}$ & Battery capacity of user $k$. \\
$\lambda_e$ & Energy arrival rate. \\
$\bar{E}_k$ & Mean of energy amount in each arrival. \\
\end{tabular}

\subsection{Basic Variables}
\begin{tabular}{p{0.12\columnwidth} p{0.76\columnwidth}}
$\boldsymbol{H}_k$ & Channel coefficient matrix from the $k$th user to the access point. \\
$\mathcal{H}$ & $\mathcal{H} := \left [ \boldsymbol{H}_1, \ldots, \boldsymbol{H}_K \right]$. \\
$\boldsymbol{x}_k(t)$ & Transmitted vector signal of user $k$ at time $t$. \\
$\boldsymbol{y}(t)$ & Received complex-baseband signal at the access point. \\
$\boldsymbol{z}(t)$ & Zero mean additive complex-Gaussian. \\
$E_{i, k}$ & Amount of the $i$th energy arrival of user $k$. \\
$\boldsymbol{\mathcal{E}}_k$ & Vector collecting $\{ E_{i, k} \}_{i = 1}^N$. \\
$\boldsymbol{\mathcal{E}}$ & Matrix collecting $\{ \boldsymbol{\mathcal{E}}_k \}_{k = 1}^K$. \\
$L_i$ & Length of the $i$th epoch. \\
$\boldsymbol{\mathcal{L}}$ & Vector collecting $\{ L_i \}_{i = 1}^N$. \\
$E_{n, k}^a$ & Amount of energy that has been harvested so far by user $k$. \\
$E_{n, k}^c$ & Least amount of energy that must be consumed so far by user $k$. \\
$w_k$ & Priority weight of user $k$. \\
$\boldsymbol{w}$ & Weight vector collecting $\{ w_k \}_{k = 1}^K$. \\
$\pi$ & Permutation of user indices $\{1, \ldots, K\}$ such that $w_{\pi(1)} \geq w_{\pi(2)} \geq \cdots \geq w_{\pi(K)}$. \\
\end{tabular}

\subsection{Decision Variables}
\begin{tabular}{p{0.22\columnwidth} p{0.66\columnwidth}}
$\boldsymbol{Q}_{i, k}$ & Transmit covariance matrix of user $k$ within epoch $i$. \\
$\boldsymbol{Q}_i$ & $\boldsymbol{Q}_i := [ \boldsymbol{Q}_{i, 1}, \ldots, \boldsymbol{Q}_{i, K}]$. \\
$\boldsymbol{Q}$ & $\boldsymbol{Q} := [ \boldsymbol{Q}_{1}, \ldots, \boldsymbol{Q}_{N}]$. \\
$P_{i, k}$ & Transmit power of user $k$ within epoch $i$. \\
$\boldsymbol{P}_i$ & Vector collecting $\{ P_{i, k} \}_{k = 1}^K$. \\
$\boldsymbol{P}$ &  Matrix collecting $\{ \boldsymbol{P}_i \}_{i = 1}^N$. \\
$r_k^M(\boldsymbol{Q}_i)$ & Achieved rate of user $k$ within epoch $i$. \\
$\boldsymbol{r}^M(\boldsymbol{Q}_i)$ & Vector collecting $\{ r_k^M(\boldsymbol{Q}_i) \}_{k = 1}^K$. \\
$\boldsymbol{\Lambda}$ & Vector collecting all Lagrange multipliers. \\
$\lambda_{n, k}$, $\mu_{n, k}$ & Lagrange multipliers. \\
$\theta_{i, k}$ & $\theta_{i,k} := \sum_{n=i}^N \lambda_{n,k}^* - \sum_{n=i}^{N} \mu_{n,k}^*$. \\
$\boldsymbol{P}_{i, -k}$ & Vector collecting all power values in $\boldsymbol{P}_i$ except for $P_{i, k}$. \\
$P_{i, k}^{(q)}$ & Optimal value of $P_{i, k}$ in the $q$th iteration. \\
\end{tabular}

\noindent
\begin{tabular}{p{0.22\columnwidth} p{0.66\columnwidth}}
$\lambda_{n, k}^{(q)}$, $\mu_{n, k}^{(q)}$ & Optimal values of $\lambda_{n, k}$ and $\mu_{n, k}$ in the $q$th iteration. \\
$\omega_{i, k}^{(q)}$ & Water-level of user $k$ within epoch $i$ in the $q$th iteration. \\
$\omega_{i, k}^{+(q)}$, $\omega_{i, k}^{-(q)}$ & Constant water-levels to make the $n$th causality and non-overflow constraints of user $k$ become tight at $t_n$ in the $q$th iteration. \\
$t_{\tau}$ & First water-level changing time in Algorithm~1. \\
$\omega^+$, $\omega^-$ & Candidate water-levels in Algorithm~1. \\
$\tau^+$, $\tau^-$ & Indices of energy causality and non-overflow constraints corresponding to candidate water-levels in Algorithm~1, respectively. \\
$W^{(q)}$ & Optimal sum-throughput in the $q$th iteration. \\
$\tilde{P}_{i, k}^*$, $P_{i, k}^{(0)}$ & Initial transmit power of user $k$ within epoch $i$. \\
$\tilde{\boldsymbol{P}}_{i}^*$ & Vector collecting $\{ \tilde{P}_{i, k}^* \}_{k = 1}^K$. \\
$P_{n, k}^+$, $P_{n, k}^-$ & Constant powers to make the $n$th causality and non-overflow constraints of user $k$ become tight at $t_n$. \\
\end{tabular}

\subsection{Functions}
\begin{tabular}{p{0.22\columnwidth} p{0.66\columnwidth}}
$R(\boldsymbol{P}_i)$ & Sum-rate within epoch $i$. \\
$\mathcal{L}(\boldsymbol{P}, \boldsymbol{\Lambda})$ & Lagrangian function. \\
$g(P_{i, k})$ & Partial derivative of $R(\boldsymbol{P}_i)$ with respect to $P_{i, k}$. \\
$A_k(t)$ & Energy arrival curve. \\
$D_{\min, k}(t)$ & Minimum energy departure curve. \\
$D_{k}(t)$ & Energy departure curve. \\
\end{tabular}

\section{Modeling Preliminaries}\label{sec2}

In this section, we outline the MAC and energy harvesting models under consideration.

\subsection{Multi-Access Channel}

\begin{figure}[h]\label{fig_mac}
\centering
\includegraphics[width=0.5\textwidth]{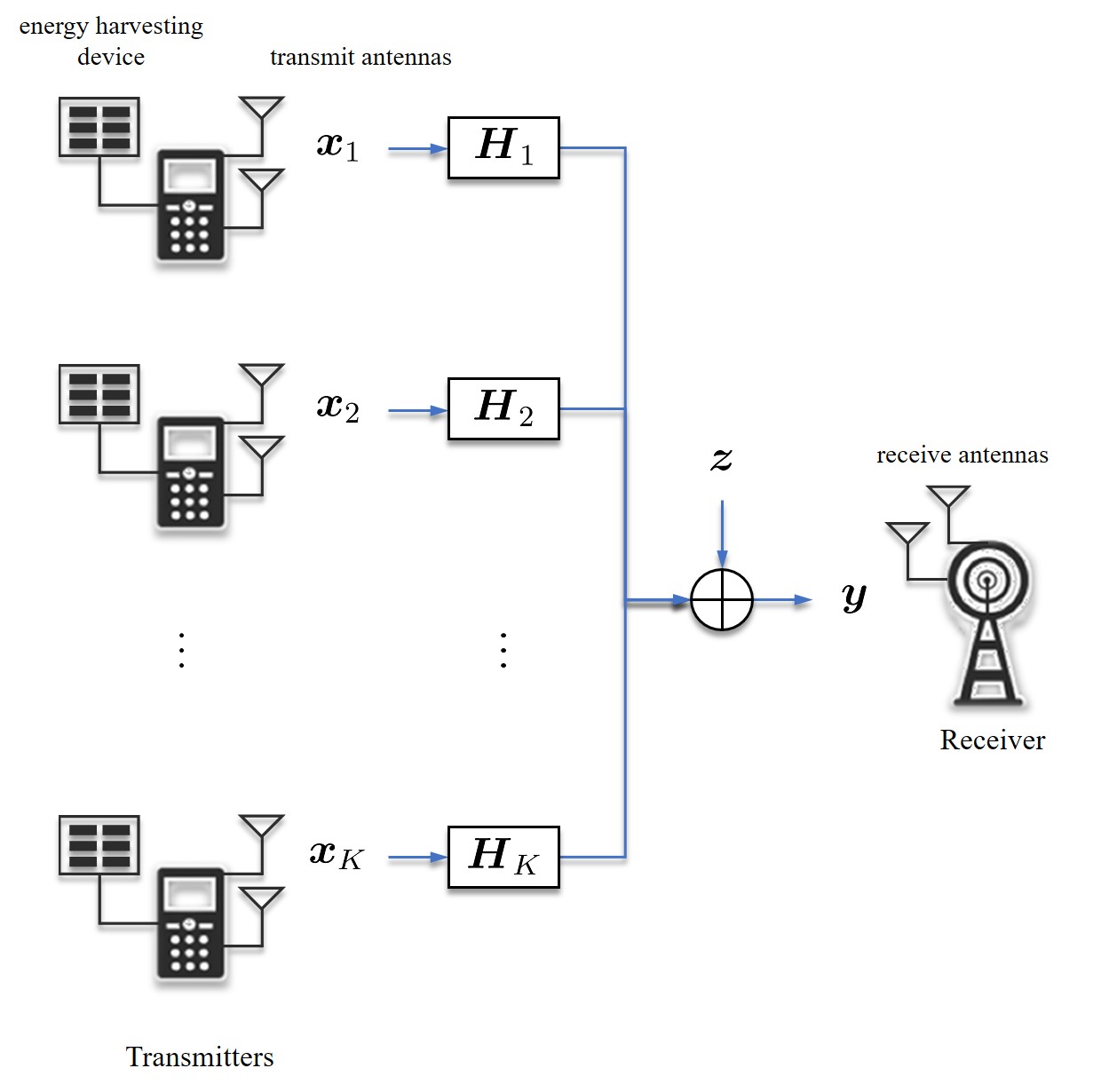}
\caption{An energy harvesting powered multi-access channel.}
\end{figure}

Consider a general MIMO MAC, where an access point serves a total of $K$ transmitters (i.e. users); see Fig.~1. Each of the $K$ users has $N_t$ transmit antennas and the access point has $N_r$ receive antennas. Let $\boldsymbol{H}_k \in \mathbb{C}^{N_r \times N_t}$ denote the channel coefficient matrix from the $k$th user to the access point, $k =1, \ldots, K$. The received complex-baseband signal at the access point is given by:
\begin{equation}
   \boldsymbol{y} (t) = \sum_{k=1}^K \boldsymbol{H}_k \boldsymbol{x}_k(t) + \boldsymbol{z}(t),
\end{equation}
where $\boldsymbol{x}_k(t)$ is the transmitted vector signal of user $k$ at time $t$, and $\boldsymbol{z}(t)$ is additive complex-Gaussian with zero mean and identity covariance matrix $\boldsymbol{I}$ of size $N_r$. Let $\boldsymbol{Q}_k:=\mathbb{E}[\boldsymbol{x}_k\boldsymbol{x}_k^{\dagger}] \succeq 0$ denote the transmit covariance matrix for user k, and let $\boldsymbol{P}:=[P_1, \ldots, P_K]$ collect the transmit-power budgets for all the $K$ users. Let $\mathcal{H} := \left [ \boldsymbol{H}_1, \ldots, \boldsymbol{H}_K \right]$. For a given $\boldsymbol{P}$, the MAC capacity region is:
\begin{equation}\nonumber
\begin{aligned}
& \mathcal{C}_{\text{MAC}}(\boldsymbol{P};\mathcal{H}) = \left . \bigcup_{\{\boldsymbol{Q}_k: \; \text{tr}(\boldsymbol{Q}_{k}) \leq P_{k}, \; \forall k\}} \right \{(r_1, \ldots, r_K): \\
&~~~ \left . \sum_{k \in \mathcal{S}} r_k \leq \log \left|\boldsymbol{I} + \sum_{k \in \mathcal{S}} \boldsymbol{H}_k \boldsymbol{Q}_k \boldsymbol{H}_k^{\dagger}\right|, \; \forall \mathcal{S} \subseteq \{1,\ldots, K\} \right \},
\end{aligned}
\end{equation}
where $r_k$ is the achievable transmission rate of user $k$.

\subsection{Energy Harvesting Process}

Suppose that all users do not have persistent power supply. Instead, with the energy harvesting devices and rechargeable batteries, each of the $K$ users could harvest renewable energy (e.g. solar, wind or RF) from the surrounding environment and then store the energy in the battery for future use, as shown in Fig.~1. Each user is powered by an individual energy harvesting process. 
To be specific, for user $k$, the battery capacity is $E_{\max,k}$, and the initial energy available in the battery (at time $t_0 =0$) is denoted by $E_{0,k}$. Over the entire transmission interval $[0, T]$, suppose that there are $N-1$ energy arrivals occurring at time $\{t_1, t_2, \ldots, t_{N-1}\}$ in amounts $\{E_{1,k}, E_{2,k}, \ldots, E_{N-1,k}\}$. Note that we only require $E_{i,k}>0$ for a certain $k$ at any $t_i$; i.e., it is allowed that $E_{i,k}=0$ without loss of generality (w.l.o.g.) for some $k$, implying the independence of the $K$ users' energy harvesting processes; see an illustration in Fig.~2. For convenience, let $t_N = T$. The time interval between two consecutive energy arrivals is defined as an epoch.
The length of the $i$th epoch is then $L_i = t_i- t_{i-1}$, $i = 1, \ldots, N$. It is clear that we have $0 < E_{i,k} \leq E_{\max,k}$, $i = 0, 1, \ldots, N-1$, $k = 1, \ldots, K$; otherwise, the excess energy $E_{i,k} - E_{\max,k}$ cannot be stored in the battery and w.l.o.g. we can set $E_{i,k}=E_{\max,k}$ in such cases. 


\begin{figure}[h]\label{fig_mac}
\centering
\includegraphics[width=0.42\textwidth]{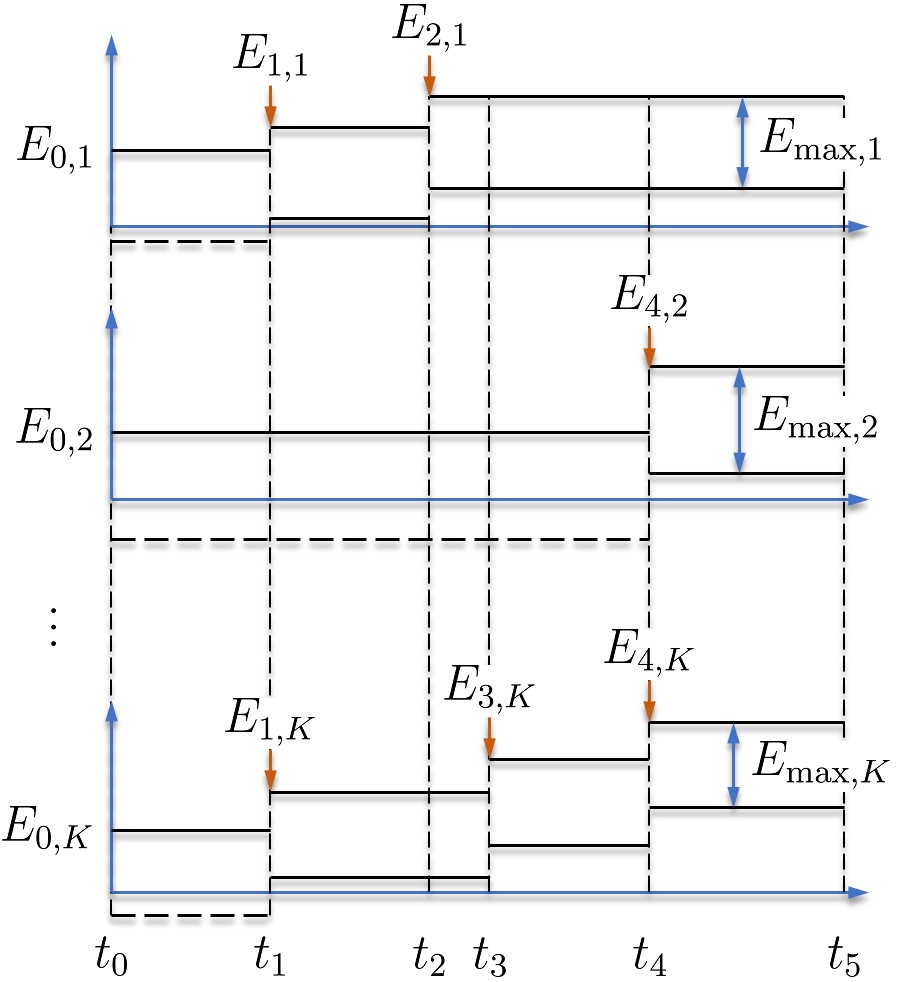}
\caption{Energy harvesting processes of the $K$ users.}
\end{figure}

\section{Optimal Scheduling for Energy-Harvesting Powered MIMO MAC}\label{sec3}

Consider a time-invariant channel $\mathcal{H}:=\{ \boldsymbol{H}_1, \ldots, \boldsymbol{H}_K \}$ over transmission interval $[0, T]$. Let $\boldsymbol{Q}_i:=[\boldsymbol{Q}_{i,1}, \ldots, \boldsymbol{Q}_{i,K}]$ collect the transmit-covariance matrices during the $i$th epoch and $\boldsymbol{Q}:=[\boldsymbol{Q}_1, \ldots, \boldsymbol{Q}_N]$. With the covariance matrix $\boldsymbol{Q}_{i,k}$, the transmit-power for user $k$ within epoch $i$ is then $P_{i,k} := \text{tr}(\boldsymbol{Q}_{i,k})$, where $\text{tr}(\cdot)$ denotes the trace operation. Let $r_k^M(\boldsymbol{Q}_i)$ denote the achieved rate for user $k$ within epoch $i$, and $\boldsymbol{r}^M(\boldsymbol{Q}_i) := [r_1^M(\boldsymbol{Q}_i),\ldots,r_K^M(\boldsymbol{Q}_i)]$. For convenience, we define $E_{n,k}^a:=\sum_{i=0}^{n-1}E_{i,k}$, and $ E_{n,k}^c:=(\sum_{i=0}^{n}E_{i,k} - E_{\max,k})^+,\forall k,\forall n$, where $E_{N, k} := E_{\max, k}, \forall k$. Provided a priority weight vector $\boldsymbol{w} :=[w_1,\ldots,w_K]$, we aim to maximize the weighted-sum of user throughput:
\begin{equation}\label{eq.p}
\begin{array}{cl}
\displaystyle \max_{\boldsymbol{Q}} & \displaystyle \sum_{k=1}^{K}{\left [w_k \sum_{i=1}^{N}{(r_k^M(\boldsymbol{Q}_i) L_i)} \right ]} \\
\displaystyle \text{s. t.} & \displaystyle \text{(C1): } \sum_{i=1}^{n}(P_{i, k} L_i) \leq E_{n, k}^{a}, \;\; \forall n, \;\; \forall k, \\
\displaystyle & \displaystyle \text{(C2): } \sum_{i=1}^{n}(P_{i, k} L_i) \geq E_{n, k}^{c}, \;\; \forall n, \;\; \forall k, \\
\displaystyle & \displaystyle \text{(C3): } \boldsymbol{r}^M(\boldsymbol{Q}_i) \in \mathcal{C}_{\text{MAC}} (\boldsymbol{P}_i; \mathcal{H}), \;\; \forall n, \;\; \forall k. \\
\end{array}
\end{equation}
Here, the first set of constraints (C1) are the energy causality constraints: the total amount of energy consumed by user $k$ up to any arrival time $t_n$ cannot be greater than $E_{n, k}^a$, which is what has been accumulatively harvested so far by user $k$. The second set of constraints (C2) are the non-overflow constraints: the total amount of energy consumed by user $k$ up to $t_n$ cannot be less than $E_{n, k}^c$, in order to avoid the waste of energy \cite{Tut12}.

Due to the finite battery capacity, an energy overflow occurs if the sum of unconsumed energy and newly arriving energy with any user $k$ exceeds $E_{\max, k}$ at the time of energy arrival. Since any transmission policy causing energy overflows can be dominated by a policy without such overflows, the optimal power allocation must satisfy the non-overflow constraints. Hence, the optimal policy can be found among the policies that satisfy all the energy causality and non-overflow constraints in \eqref{eq.p}. Note that with $E_{N, k} = E_{\max, k}, \forall k$, the non-overflow and causality constraints at $t_N$ render $\sum_{i = 1}^N (P_{i, k} L_i) = \sum_{i = 0}^{N - 1} E_{i, k}, \forall k$, i.e., all the energy harvested must be used up at the end per user $k$.

With $R(\boldsymbol{P}_i) := \max_{\boldsymbol{r}^M(\boldsymbol{Q}_i) \in \mathcal{C}_{\text{MAC}}(\boldsymbol{P}_i; \mathcal{H})} \; \sum_{k=1}^K w_k r_k^M(\boldsymbol{Q}_i)$, we can establish that:
\begin{lemma}\label{lemma1}
{\it The strictly concave function $R(\boldsymbol{P}_i)$ can be alternatively obtained as the optimal value of the following convex problem:
\begin{equation}\label{eq.Rp}
\begin{array}{cl}
\displaystyle \max_{\boldsymbol{Q}_{i, k} \succeq 0} & \displaystyle \left . \sum_{k=1}^{K}(w_{\pi(k)}-w_{\pi(k+1)}) \log \right | \boldsymbol{I} \\
& \displaystyle ~~~~~~~~~~~~~~~~~~ + \left . \sum_{u=1}^{k}{\boldsymbol{H}_{\pi(u)} \boldsymbol{Q}_{i, \pi(u)} \boldsymbol{H}_{\pi(u)}^{\dagger} } \right | \\
\displaystyle \text{s. t.} & \displaystyle \text{tr}(\boldsymbol{Q}_{i, k})=P_{i, k}, \;\; k=1, \ldots, K, \\
\end{array}
\end{equation}
where $\pi$ is the permutation of user indices $\{1, \ldots, K\}$ such that $w_{\pi(1)} \geq w_{\pi(2)} \geq \cdots \geq w_{\pi(K)}$, and $w_{\pi(K+1)}=0$.}
\end{lemma}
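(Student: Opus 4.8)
The plan is to exploit the \emph{polymatroid} structure of the Gaussian MAC capacity region, reducing the weighted-rate maximization to a linear optimization over a polymatroid, which admits a greedy (vertex) solution ordered by the weights. Concretely, I would fix a feasible collection of covariance matrices $\{\boldsymbol{Q}_{i,k}\}$ and define the set function $f(\mathcal{S}) := \log\left|\boldsymbol{I} + \sum_{k\in\mathcal{S}} \boldsymbol{H}_k \boldsymbol{Q}_{i,k} \boldsymbol{H}_k^{\dagger}\right|$ for $\mathcal{S}\subseteq\{1,\ldots,K\}$. Since $f$ is a normalized, nondecreasing, submodular set function (a standard property of the Gaussian MAC), the region $\{(r_1,\ldots,r_K): \sum_{k\in\mathcal{S}} r_k \le f(\mathcal{S}),\ \forall\mathcal{S}\}$ is a polymatroid, and the program $\max \sum_k w_k r_k$ over it is maximized at the vertex produced by the greedy rule in decreasing weight order. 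Invoking this classical fact, the optimal rates are $r_{\pi(k)}^{\star} = f(\{\pi(1),\ldots,\pi(k)\}) - f(\{\pi(1),\ldots,\pi(k-1)\})$.

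The next step is an Abel (summation-by-parts) rearrangement. Substituting the greedy rates gives $\sum_{k=1}^K w_{\pi(k)} r_{\pi(k)}^{\star} = \sum_{k=1}^K w_{\pi(k)}\bigl(f_k - f_{k-1}\bigr)$, where $f_k := f(\{\pi(1),\ldots,\pi(k)\})$ and $f_0 := 0$; regrouping the telescoping sum turns this into $\sum_{k=1}^K (w_{\pi(k)} - w_{\pi(k+1)})\, f_k$ with the convention $w_{\pi(K+1)} = 0$, which is exactly the objective of \eqref{eq.Rp}. I would then take the maximum over $\{\boldsymbol{Q}_{i,k}\}$ on both sides: because $\mathcal{C}_{\text{MAC}}(\boldsymbol{P}_i;\mathcal{H})$ is the union over feasible covariances of the corresponding rate polymatroids, maximizing the weighted sum over the region and then over covariances coincides with the joint maximization, establishing that $R(\boldsymbol{P}_i)$ equals the optimal value of \eqref{eq.Rp}. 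The trace constraints can be taken as equalities since each $f_k$ is nondecreasing in every $\boldsymbol{Q}_{i,\pi(u)}$ (in the PSD order) and all coefficients $w_{\pi(k)} - w_{\pi(k+1)}$ are nonnegative, so it is never suboptimal to spend the full power budget.

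Convexity of \eqref{eq.Rp} then follows immediately: the ordering $w_{\pi(1)} \ge \cdots \ge w_{\pi(K)}$ makes every coefficient $w_{\pi(k)} - w_{\pi(k+1)} \ge 0$, each term $\log|\boldsymbol{I} + \sum_{u=1}^k \boldsymbol{H}_{\pi(u)} \boldsymbol{Q}_{i,\pi(u)} \boldsymbol{H}_{\pi(u)}^{\dagger}|$ is concave in the tuple of covariances (concavity of $\log\det$ on the PSD cone composed with an affine map), and the feasible set is the intersection of the PSD cone with affine trace constraints, hence convex. A nonnegative weighted sum of concave functions maximized over a convex set is a concave program.

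Finally, for strict concavity of $R(\boldsymbol{P}_i)$ in $\boldsymbol{P}_i$, I would use the value-function (partial-maximization) argument: given distinct budgets $\boldsymbol{P}_i^{(1)}\neq\boldsymbol{P}_i^{(2)}$ with optimizers $\boldsymbol{Q}^{(1)},\boldsymbol{Q}^{(2)}$, the convex combination $\theta\boldsymbol{Q}^{(1)}+(1-\theta)\boldsymbol{Q}^{(2)}$ is feasible for $\theta\boldsymbol{P}_i^{(1)}+(1-\theta)\boldsymbol{P}_i^{(2)}$, so $R$ at the averaged budget is at least the objective evaluated there, and strict concavity of $\log\det$ on the image matrices then yields the strict inequality. \textbf{The main obstacle} I anticipate is precisely this last point: $\log|\boldsymbol{I}+\boldsymbol{H}\boldsymbol{Q}\boldsymbol{H}^{\dagger}|$ is strictly concave in the effective covariance $\boldsymbol{H}\boldsymbol{Q}\boldsymbol{H}^{\dagger}$ but only weakly concave in $\boldsymbol{Q}$ when $\boldsymbol{H}$ has a nontrivial null space, and distinct power budgets need not force distinct optimizers in every coordinate. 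I would close this gap by showing that a difference in the budget forces a strict change in at least one active log-det term—using monotonicity in the PSD order together with a full-column-rank assumption on the channels $\boldsymbol{H}_k$—so that the strict concavity of $\log\det$ is genuinely activated.
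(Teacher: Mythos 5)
Your proposal is correct, and it takes a genuinely different (and in one respect more complete) route than the paper. The paper's Appendix~A never actually proves that the weighted sum-rate maximum over $\mathcal{C}_{\text{MAC}}(\boldsymbol{P}_i;\mathcal{H})$ equals the optimal value of \eqref{eq.Rp}: that reformulation is taken for granted as the classical polymatroid result for the Gaussian MAC, which is precisely what you supply via the submodular set function $f(\mathcal{S})$, the greedy vertex ordered by decreasing weights, and the Abel rearrangement into $\sum_k (w_{\pi(k)}-w_{\pi(k+1)})f_k$. The appendix is devoted entirely to concavity of the value function, and there the two arguments differ: you use the direct partial-maximization argument --- the convex combination of optimizers for two budgets is feasible for the averaged budget because the trace constraints are linear, so concavity of the objective gives $\beta R(\tilde{\boldsymbol{P}}_i)+(1-\beta)R(\hat{\boldsymbol{P}}_i)\leq R(\check{\boldsymbol{P}}_i)$ immediately --- whereas the paper runs a primal-dual chain of inequalities, evaluating the Lagrangian at the optimal multipliers $\boldsymbol{\lambda}^*(\check{\boldsymbol{P}}_i)$ of the averaged problem and invoking the saddle-point property twice. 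Your version is shorter and requires no duality machinery; the paper's buys nothing additional for this lemma. On \emph{strict} concavity the two proofs share the same weak spot, and your diagnosis is the more honest one: the paper simply asserts that strictness follows ``due to the strict concavity of log function,'' but $\log|\boldsymbol{I}+\boldsymbol{H}_k\boldsymbol{Q}_{i,k}\boldsymbol{H}_k^{\dagger}|$ is only weakly concave in $\boldsymbol{Q}_{i,k}$ when $\boldsymbol{H}_k$ has a nontrivial null space (in the extreme case $\boldsymbol{H}_k=\boldsymbol{0}$, $R$ is constant in $P_{i,k}$), so strictness genuinely needs the rank-type condition you flag. Making that assumption explicit would render your argument the more rigorous of the two; as written, neither proof fully closes that step.
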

\begin{proof}
See Appendix~A.
\end{proof}

Using $R(\boldsymbol{P}_i)$, we can then reformulate (\ref{eq.p}) into the following power allocation problem:
\begin{equation}\label{eq.p1}
\begin{array}{cll}
\displaystyle \max_{\{ \boldsymbol{P}_i \}} & \displaystyle \sum_{i=1}^{N}{\left [ R(\boldsymbol{P}_i) L_i \right ]} \\
\displaystyle \text{s. t.} & \displaystyle \sum_{i=1}^{n}(P_{i, k} L_i) \leq E_{n, k}^{a}, & \forall n, \;\; \forall k, \\
\displaystyle & \displaystyle \sum_{i=1}^{n}(P_{i, k} L_i) \geq E_{n, k}^{c}, & \forall n, \;\; \forall k, \\
\displaystyle & \displaystyle P_{i,k} \geq 0, & \forall n, \;\; \forall k.
\end{array}
\end{equation}

\subsection{Optimality Conditions}

Since $R(\boldsymbol{P}_i)$ is a concave function of $\boldsymbol{P}_i$ per Lemma~\ref{lemma1}, it then follows that \eqref{eq.p1} is a convex problem. Let $\boldsymbol{\Lambda}:=\{\lambda_{n, k}, \mu_{n, k}, n=1,\ldots, N, k = 1, \ldots, K\}$, where $\lambda_{n, k}$ and $\mu_{n, k}$ denote the Lagrange multipliers associated with the causality and non-overflow constraints of user $k$, respectively. The Lagrangian of (\ref{eq.p1}) is given by:
\begin{equation}\nonumber
\begin{aligned}
& \mathcal{L}(\boldsymbol{P}, \boldsymbol{\Lambda}) = \displaystyle\sum_{i=1}^{N}{[R(\boldsymbol{P}_i) L_i]} \\
& ~~~~~~~~~~~~~~ - \displaystyle\sum_{n=1}^{N}{\sum_{k=1}^{K}{\lambda_{n, k} \left ( \displaystyle\sum_{i=1}^{n}{P_{i, k}L_i-E_{n, k}^{a}} \right )}} \\
& ~~~~~~~~~~~~~~ + \displaystyle\sum_{n=1}^{N}{\displaystyle\sum_{k=1}^{K}{\mu_{n, k} \left (\displaystyle\sum_{i=1}^{n}{P_{i, k}L_i-E_{n, k}^{c}} \right )}} \\
& = \displaystyle\sum_{i=1}^{N} \left [R(\boldsymbol{P}_i) -\displaystyle\sum_{k=1}^{K}{\left (\displaystyle\sum_{n=i}^{N}{\lambda_{n, k}}-\displaystyle\sum_{n=i}^{N}{\displaystyle\mu_{n, k}} \right ) P_{i, k}L_i} \right ] + \mathcal{C}(\boldsymbol{\Lambda})
\end{aligned}
\end{equation}
where $\mathcal{C}(\boldsymbol{\Lambda})=\sum_{n=i}^{N}{(\lambda_{n, k}E_{n, k}^{a})}-\sum_{n=i}^{N}{(\mu_{n, k}E_{n, k}^{c})}$.

Let $\boldsymbol{P}_i^*$ denote the optimal solution for (\ref{eq.p1}) and $\boldsymbol{\Lambda}^*$ denote the optimal Lagrange multiplier. Define $\theta_{i,k} := \sum_{n=i}^N \lambda_{n,k}^* - \sum_{n=i}^{N} \mu_{n,k}^*$. Based on the Karush-Kuhn-Tucker (KKT) optimality conditions \cite{convex}, we must have: $\forall i$, $\forall k$,
\begin{equation}\label{eq.kkt1p}
\begin{aligned}
\boldsymbol{P}_i^* & =\arg \max_{\boldsymbol{P}_i \succeq 0}{\left [R(\boldsymbol{P}_i) L_i-\sum_{k=1}^{K}{\theta_{i, k} P_{i, k} L_i} \right ]} \\
& =\arg \max_{\boldsymbol{P}_i \succeq 0}{\left [R(\boldsymbol{P}_i)-\sum_{k=1}^{K}{\theta_{i, k} P_{i, k}} \right ]}. \\
\end{aligned}
\end{equation}
In addition, the non-negative multipliers $\lambda_{n,k}^*$ and $\mu_{n,k}^*$ satisfy the complementary slackness conditions: $\forall n, \forall k$,
\begin{equation}\label{eq.kkt2p}
\left \{
\begin{array}{c}
\lambda_{n, k}^*=0, \;\; {\text{if}} \;\; \sum_{i=1}^{n}{(P_{i, k}^* L_i) < E_{n, k}^a}; \\
\sum_{i=1}^{n}{(P_{i, k}^* L_i)=E_{n, k}^a}, \;\; {\text{if}} \;\; \lambda_{n, k}^* > 0; \\
\end{array}
\right.
\;\; \forall{n}, \;\; \forall{k}.
\end{equation}
\begin{equation}\label{eq.kkt3p}
\left \{
\begin{array}{c}
\mu_{n, k}^*=0, \;\; {\text{if}} \;\; \sum_{i=1}^{n}{(P_{i, k}^* L_i) > E_{n, k}^c}; \\
\sum_{i=1}^{n}{(P_{i, k}^* L_i)=E_{n, k}^c}, \;\; {\text{if}} \;\; \mu_{n, k}^* > 0; \\
\end{array}
\right.
\;\; \forall{n}, \;\; \forall{k}.
\end{equation}

Since $R(\boldsymbol{P}_i)$ is not given in closed-form, the globally optimal $\{ \boldsymbol{P}_i^* \}$ for \eqref{eq.p1} satisfying \eqref{eq.kkt1p}--\eqref{eq.kkt3p} are challenging to find by general convex program solvers. Relying on the specific structure revealed by \eqref{eq.kkt1p}--\eqref{eq.kkt3p}, we next develop a low complexity block coordinate ascent method to obtain $\{ \boldsymbol{P}_i^* \}$.

\subsection{Optimal User Power Allocation}

Different from the point-to-point and broadcast transmissions, each user in the MAC has its own set of energy causality and non-overflow constraints. These $K$ sets of energy harvesting constraints could have coupled effect on the optimal user transmission strategies. To bypass this difficulty, we resort to a sequential optimization manner. To be specific, we find the optimal power allocation for a single user with all other users' powers being fixed per iteration.

Let $\boldsymbol{P}_{i, -k}$ collect all the power values in $\boldsymbol{P}_i$ except for $P_{i,k}$, and rewrite $R^{(q)}(P_{i,k}, \boldsymbol{P}_{i, -k}):=R(\boldsymbol{P}_i)$. Let  $q$ denote the iteration index, $P_{i,k}^{(q)}$ the optimal value of $P_{i,k}$, and $\lambda_{n, k}^{(q)}$, $\mu_{n, k}^{(q)}$ the corresponding optimal Lagrange multipliers in the $q$th iteration. With $\boldsymbol{P}_{i, -k}$ fixed, it follows from \eqref{eq.kkt1p} that
\begin{equation}\label{eq.kkt1r}
\begin{aligned}
P_{i,k}^{(q)} & =\arg \max_{P_{i,k} \geq 0}{\left [R^{(q)}(P_{i,k}, \boldsymbol{P}_{i, -k})-\sum_{u=1}^{K}{\theta_{i, u}^{(q)} P_{i, u}} \right ]} \\
& =\arg \max_{P_{i,k} \geq 0}{\left [R^{(q)}(P_{i,k}, \boldsymbol{P}_{i, -k}) - \theta_{i, k}^{(q)} P_{i, k} \right ]}. \\
\end{aligned}
\end{equation}
On the other hand, directly from (\ref{eq.kkt2p})--(\ref{eq.kkt3p}), we must have:
\begin{equation}\label{eq.kkt2r}
\left \{
\begin{array}{c}
\lambda_{n, k}^{(q)}=0, \;\; {\text{if}} \;\; \sum_{i=1}^{n}{(P_{i, k}^{(q)} L_i) < E_{n, k}^a}; \\
\sum_{i=1}^{n}{(P_{i, k}^{(q)} L_i) = E_{n, k}^a}, \;\; {\text{if}} \;\; \lambda_{n, k}^{(q)} > 0;
\end{array}
\right.
\;\forall{n};~~~
\end{equation}
\begin{equation}\label{eq.kkt3r}
\left \{
\begin{array}{c}
\mu_{n, k}^{(q)}=0, \;\; {\text{if}} \;\; \sum_{i=1}^{n}{(P_{i, k}^{(q)} L_i) > E_{n, k}^c}; \\
\sum_{i=1}^{n}{(P_{i, k}^{(q)} L_i) = E_{n, k}^c}, \;\; {\text{if}} \;\; \mu_{n, k}^{(q)} > 0;
\end{array}
\right.
\;\forall{n}.~~~
\end{equation}

Interestingly, \eqref{eq.kkt1r}--\eqref{eq.kkt3r} correspond to the KKT conditions with the total-throughput maximization problem for an equivalent ``point-to-point'' link between user $k$ and the access point. Hence, the ``string-tautening'' algorithm in our recent work \cite{Wang14} could be applied to obtain $\{ \boldsymbol{P}_i^* \}$ with a low computational complexity. Note that the signals from other users, if not cancelled, are treated as noises in the optimal MIMO MAC decoding. Therefore, although the physical MIMO MAC is time-invariant here, the resultant equivalent point-to-point link obtained by fixing other user¡¯s powers is in fact ``time-varying'' due to the possibly different noise levels introduced by different signal powers with other users per epoch. As a result, we should employ a water-level based ``string-tautening'' algorithm to find $\{ \boldsymbol{P}_i^* \}$.

Define $g(P_{i,k}):=\frac{\partial R^{(q)}(P_{i,k}, \boldsymbol{P}_{i, -k})}{\partial P_{i, k}}$,
and let $g^{-1}$ denote the inverse function of $g$. It can be inferred from \eqref{eq.kkt1r}--\eqref{eq.kkt3r} that:
\begin{lemma}\label{lemma2}
{\it
In the optimal transmission strategy, we have the following two propers:

1) The optimal power values $\{P_{i,k}^{(q)}\}$ are given by a water-filling alike form: $P_{i,k}^{(q)}=g^{-1(q)}(1/ \omega_{i,k}^{(q)})$, where the ``water-level'' $\omega_{i,k}^{(q)}:=1/ \theta_{i, k}^{(q)}$, $\forall i$.

2) The water-level $\omega_{i,k}^{(q)}$ only changes at some $t_n$ where the causality or non-overflow constraints are tight; specifically, it increases after a $t_n$ where $\sum_{i=1}^{n} (P_{i,k}^{(q)} L_i) = E_{n,k}^a$, and it decreases after a $t_n$ where $\sum_{i=1}^{n} (P_{i,k}^{(q)} L_i) = E_{n,k}^c$.
}
\end{lemma}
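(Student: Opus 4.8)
The plan is to handle the two assertions of the lemma separately, starting from the single-user KKT system \eqref{eq.kkt1r}--\eqref{eq.kkt3r}. For part~1, I would note that once $\boldsymbol{P}_{i,-k}$ is held fixed, the maximization in \eqref{eq.kkt1r} is a one-dimensional concave program in $P_{i,k}\ge 0$: by Lemma~\ref{lemma1} the weighted sum-rate $R(\boldsymbol{P}_i)$ is strictly concave, and strict concavity persists along the single coordinate $P_{i,k}$, so the marginal $g(P_{i,k})=\partial R^{(q)}/\partial P_{i,k}$ is strictly decreasing and its inverse $g^{-1(q)}$ is well defined. The stationarity condition for $R^{(q)}-\theta_{i,k}^{(q)}P_{i,k}$ is then $g(P_{i,k}^{(q)})=\theta_{i,k}^{(q)}$ at an interior optimum, which inverts to $P_{i,k}^{(q)}=g^{-1(q)}(\theta_{i,k}^{(q)})=g^{-1(q)}(1/\omega_{i,k}^{(q)})$ after the substitution $\omega_{i,k}^{(q)}:=1/\theta_{i,k}^{(q)}$; the constraint $P_{i,k}\ge 0$ is absorbed by clipping $g^{-1(q)}$ at zero, producing the stated water-filling form. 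Before inverting I would record that $\theta_{i,k}^{(q)}>0$ necessarily, since $R^{(q)}$ is increasing and $\theta_{i,k}^{(q)}\le 0$ would render $R^{(q)}-\theta_{i,k}^{(q)}P_{i,k}$ unbounded above, contradicting optimality; hence $\omega_{i,k}^{(q)}$ is positive and finite.

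For part~2 I would exploit the telescoping form of $\theta$. Directly from $\theta_{i,k}^{(q)}=\sum_{n=i}^N\lambda_{n,k}^{(q)}-\sum_{n=i}^N\mu_{n,k}^{(q)}$ one obtains the one-step recursion $\theta_{i,k}^{(q)}-\theta_{i+1,k}^{(q)}=\lambda_{i,k}^{(q)}-\mu_{i,k}^{(q)}$, so the entire change of water level across the epoch boundary $t_i$ is controlled by the single multiplier pair $(\lambda_{i,k}^{(q)},\mu_{i,k}^{(q)})$. If neither the causality nor the non-overflow constraint indexed by $n=i$ is tight, complementary slackness \eqref{eq.kkt2r}--\eqref{eq.kkt3r} forces $\lambda_{i,k}^{(q)}=\mu_{i,k}^{(q)}=0$, whence $\theta_{i+1,k}^{(q)}=\theta_{i,k}^{(q)}$ and therefore $\omega_{i+1,k}^{(q)}=\omega_{i,k}^{(q)}$. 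This already establishes that the water level can change only at an arrival time where some constraint becomes active, which is the first half of the second property.

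To fix the direction of the change, I would first argue that the causality and non-overflow constraints cannot be simultaneously active at the same $t_i$ outside the degenerate coincidence $E_{i,k}^a=E_{i,k}^c$: from the definitions $E_{i,k}^a-E_{i,k}^c\ge 0$, with equality only when a full-battery arrival occurs, so generically at most one of $\lambda_{i,k}^{(q)},\mu_{i,k}^{(q)}$ is nonzero. When the causality constraint is tight, $\mu_{i,k}^{(q)}=0$ and $\lambda_{i,k}^{(q)}\ge 0$, giving $\theta_{i+1,k}^{(q)}=\theta_{i,k}^{(q)}-\lambda_{i,k}^{(q)}\le\theta_{i,k}^{(q)}$; since $t\mapsto 1/t$ is decreasing on the positive axis and both $\theta$'s are positive, $\omega_{i+1,k}^{(q)}\ge\omega_{i,k}^{(q)}$, i.e. the level \emph{rises}. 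Symmetrically, a tight non-overflow constraint gives $\lambda_{i,k}^{(q)}=0$, $\mu_{i,k}^{(q)}\ge 0$, so $\theta_{i+1,k}^{(q)}\ge\theta_{i,k}^{(q)}$ and the level \emph{falls}, completing the lemma.

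The routine parts are the stationarity inversion and the telescoping identity; the delicate points, which I would flag as the main obstacle, are the two structural facts that make the statement well posed: confirming strictly that $\theta_{i,k}^{(q)}>0$, so that passing from $\theta$ to $\omega=1/\theta$ is legitimate and correctly monotonicity-reversing, and ruling out the simultaneous activity of both constraint families at a single $t_i$, which is precisely what makes the sign of the water-level jump unambiguous. Both reduce to elementary observations about the energy curves and boundedness of the objective, so I expect no serious difficulty beyond stating them carefully.
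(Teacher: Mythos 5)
Your proposal is correct and follows essentially the same route as the paper's Appendix~B: stationarity of the per-coordinate concave program yields $g^{(q)}(P_{i,k}^{(q)})=\theta_{i,k}^{(q)}$ and hence the water-filling form, while the telescoping tail-sum structure $1/\omega_{n+1,k}^{(q)}-1/\omega_{n,k}^{(q)}=-\lambda_{n,k}^{(q)}+\mu_{n,k}^{(q)}$ combined with complementary slackness \eqref{eq.kkt2r}--\eqref{eq.kkt3r} determines when and in which direction the water level can change. The only difference is that you explicitly justify two facts the paper uses without proof --- the positivity of $\theta_{i,k}^{(q)}$ (needed to pass to $\omega=1/\theta$) and the impossibility of both constraint families being simultaneously tight at the same $t_n$ --- which tightens, rather than alters, the argument.
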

\begin{proof}
See Appendix~B.
\end{proof}

Since $R(\boldsymbol{P}_i)$ is strictly concave, $g(P_{i,k})$ is strictly decreasing in $P_{i,k}$. Hence, $P_{i,k}^{(q)}=g^{-1(q)}(\theta_{i, k}^{(q)}) =g^{-1(q)}(1/ \omega_{i,k}^{(q)})$ is an increasing function of the water-level $\omega_{i,k}^{(q)}$. In the optimal user power allocation, it is not difficult to see that we shall maintain a constant water-level whenever it is possible. Note that by the water-filling principle, a constant water-level leads to a higher power for an epoch with higher quality to achieve the most efficient usage of available resources. We have to change the water-level when the energy causality or non-overflow constraints become tight. A causality constraint is tight, i.e., all available energy is used up at $t_n$ when the energy harvested so far is not enough; as a result, a lower water-level (thus power) is maintained before $t_n$ than that after. Similarly, a non-overflow constraint is tight at $t_n$ when the energy harvested so far is abundant, then a higher water-level (thus power) is maintained before $t_n$ than that after.

Based on the structure revealed in Lemma~2, we next develop a water-level based ``string-tautening'' algorithm. Let $\omega_{n,k}^{+ (q)}$ and $\omega_{n,k}^{- (q)}$ denote the constant water-levels to make the $n$th causality and non-overflow constraints of user $k$ become tight at $t_n$ in the $q$th iteration, respectively. Given an invariant water-level $\omega$ before $t_n$, the user power per epoch $i$ is given by $P_{i,k}^{(q)}=g^{-1 (q)}(1/ \omega)$. Thus, the values of $\omega_{n,k}^{+ (q)}$ and $\omega_{n,k}^{- (q)}$, $\forall n$, $\forall k$, can be calculated by solving the equations: $\forall n$, $\forall k$,
\begin{equation}\label{eq.w}
\begin{array}{c}
\displaystyle \sum_{i=1}^{n} \left [\left (g^{-1 (q)} \left (\textstyle \frac{1}{\omega_{i,k}^{+ (q)}} \right ) \right ) L_i \right ] = E_{n,k}^a; \\
\displaystyle \sum_{i=1}^{n} \left [\left (g^{-1 (q)} \left (\textstyle \frac{1}{\omega_{i,k}^{- (q)}} \right ) \right ) L_i \right ] = E_{n,k}^c. \\
\end{array}
\end{equation}
Note that $\sum_{i=1}^{n} [g^{-1 (q)}(1/ \omega) L_i]$ is increasing in $\omega$. Hence, the equations in \eqref{eq.w} can be solved by a bisection search.

Upon with $\omega_{n,k}^{+ (q)}$ and $\omega_{n,k}^{- (q)}$ obtained, we are ready for implementation of the water-level based ``string-tautening'' scheme. Define $\boldsymbol{\mathcal{E}}_k:=\{E_{0,k}, E_{1,k}, \ldots, E_{N,k}\}$ and $\boldsymbol{\mathcal{L}}:=\{L_1, \ldots, L_N\}$. The powers $P_{i,k}^{(q)}$, $i=1, \ldots, N$ can be obtained by calling Procedure {\bf Schedule}($\boldsymbol{\mathcal{E}}_k$, $\boldsymbol{\mathcal{L}}$, $k$) in Algorithm~1.

\begin{algorithm}[h]
\caption{Water-Level based String Tautening}
\begin{algorithmic}[1]

\Procedure {Schedule}{energy set $\boldsymbol{\mathcal{E}}_k$, length set $\boldsymbol{\mathcal{L}}$, user $k$}
	\State $N_{\text{offset}} = 0$, $P_{i,k}^{(q)}=0$, $\forall i$;
	\While {$N_{\text{offset}}<N$}
		\State [$\tau$, $\omega$, $E$] = FirstChangeW($\boldsymbol{\mathcal{E}}_k$, $\boldsymbol{\mathcal{L}}$, $k$);
		\For {$i = 1$ to $\tau$}
			\State $P_{i,k}^{(q)}=g^{-1 (q)}(1/ \omega)$;
		\EndFor
		
		\State $N_{\text{offset}}= N_{\text{offset}}+\tau$;
		\State update $\boldsymbol{\mathcal{E}}_k$, $\boldsymbol{\mathcal{L}}$;
	\EndWhile
\EndProcedure
\Statex

\Function {[$\tau$, $\omega$, $E$] =FirstChangeW}{$\boldsymbol{\mathcal{E}}_k$, $\boldsymbol{\mathcal{L}}$, $k$}
    \State $\omega^-=0$, $\omega^+ = \infty$, $\tau^- = \tau^+ = 0$; $N_e = |\mathcal{E}_k|$;
    \For {$n=1$ to $N_e$}
        \State obtain $\omega_{n,k}^{+ (q)}$ and $\omega_{n,k}^{- (q)}$ by solving equations in \eqref{eq.w};
        \If {$\omega_{n,k}^{+ (q)} \leq \omega^+$}
            \State $\tau^+ = n$, $\omega^+ = \omega_{n,k}^{+ (q)}$, $E^+ = E_{\tau,k}^a$;
        \EndIf
        \If {$\omega_{n,k}^{- (q)} \geq \omega^-$}
            \State $\tau^- = n$, $\omega^- =\omega_{n,k}^{- (q)}$, $E^- = E_{\tau,k}^c$;
        \EndIf

        \If {$\omega^- > \omega^+$ \& $\tau^- < \tau^+$}
            \State return $\tau =\tau^-$, $\omega = \omega^-$, $E = E^-$;
        \ElsIf {($\omega^- \geq \omega^+$ \& $\tau^- \geq \tau^+$) or ($\tau^+ = N_e$)}
            \State return $\tau =\tau^+$, $\omega = \omega^+$, $E = E^+$;
        \EndIf
    \EndFor
\EndFunction
\end{algorithmic}
\end{algorithm}

The key component in Algorithm~1 is the function FirstChange, which determines the first water-level changing time $t_{\tau}$ and the water-level $\omega$ used before $t_{\tau}$. The two candidate water-levels are updated as: $\omega^+=\min_{i \leq n}\omega_{i,k}^{+ (q)}$, and $\omega^-=\max_{i \leq n}\omega_{i,k}^{- (q)}$. If we have $\omega^+ < \omega^-$ at a certain time $t_n$, then the water-level needs to be changed before $t_n$ since no invariant water-level can satisfy all the causality and non-overflow constraints so far. The first water-level changing time can be obtained by comparing $\tau^-$ and $\tau^+$ to see which type of constraint first becomes tight. When the returned $t_{\tau}< T$, the function FirstChange can be reused for a new ($\boldsymbol{\mathcal{E}}_k$, $\boldsymbol{\mathcal{L}}$, $k$) system over the remaining time to find the next water-level.

\begin{proposition}
{\it Algorithm~1 can compute $P_{i,k}^{(q)}$, $i=1, \ldots, N$ with a complexity $\mathcal{O}(N)$.}
\end{proposition}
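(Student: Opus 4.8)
The plan is to analyse Algorithm~1 as the procedure that reconstructs the piecewise-constant water-level profile $\{\omega_{i,k}^{(q)}\}_{i=1}^N$ guaranteed by Lemma~\ref{lemma2}, and to charge all arithmetic to the boundaries of its constant-level segments. By part~2) of Lemma~\ref{lemma2}, the water level is constant between consecutive tight constraints and changes only at some $t_n$ at which a causality or a non-overflow constraint is active; consequently the profile consists of at most $N$ maximal constant-level segments, each terminated by exactly one tight constraint. My target is to show (i) the number of such segments is $\mathcal{O}(N)$, (ii) the cost of determining one segment's level is $\mathcal{O}(1)$, and (iii) the bookkeeping overhead across segments is $\mathcal{O}(N)$.

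First I would establish termination together with the segment count. At the first index $n=1$ of any call to \textsc{FirstChangeW} both running pointers are set to $\tau^+=\tau^-=1$, and no conflict can be flagged there because the non-overflow floor never exceeds the causality ceiling ($E_{1,k}^c\le E_{1,k}^a$), which by the monotonicity of $\sum_{i\le n} g^{-1(q)}(1/\omega)L_i$ in $\omega$ (noted below \eqref{eq.w}) forces $\omega_{1,k}^{-(q)}\le\omega_{1,k}^{+(q)}$. Hence every call returns some $\tau\ge 1$ and fixes the powers of a nonempty prefix of the remaining epochs, which \textsc{Schedule} then removes while advancing $N_{\text{offset}}$ by $\tau$. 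Therefore the returned values satisfy $\sum_{\text{calls}}\tau=N$, so there are at most $N$ calls and at most $N$ segments, and the \textbf{while} loop terminates.

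Next I would bound the per-index primitive. Inside a call, maintaining the candidate levels $\omega^+=\min_{i\le n}\omega_{i,k}^{+(q)}$ and $\omega^-=\max_{i\le n}\omega_{i,k}^{-(q)}$ and evaluating the return tests is $\mathcal{O}(1)$ per index $n$. The only nontrivial per-index cost is solving the two equations in \eqref{eq.w} for $\omega_{n,k}^{+(q)}$ and $\omega_{n,k}^{-(q)}$. Because the left-hand sides are monotone in $\omega$, each is a scalar root-finding solved by bisection to a fixed tolerance, and the required partial sums $\sum_{i\le n} g^{-1(q)}(1/\omega)L_i$ can be accumulated incrementally from their values at $n-1$; I would argue this keeps the work per index $\mathcal{O}(1)$, i.e. a constant number of evaluations of the scalar map $g^{-1(q)}$.

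The hard part is step~(iii): ruling out super-linear re-scanning. When a call returns $\tau=\tau^-<\tau^+$ (or $\tau=\tau^+<\tau^-$), the conflict $\omega^->\omega^+$ is detected at some index $n^\star>\tau$, so the epochs $\tau+1,\dots,n^\star$ are scanned again by the next call, and a literal restart could in principle repeat this. I would close this gap with an amortized charging argument: reinterpret the sweep so that already-computed candidates are pushed on a stack and a returned segment \emph{merges}---rather than discards---the violating block, in the manner of a pool-adjacent-violators sweep. Each epoch is then pushed once and participates in at most one merge, so the total number of index visits, hence the total number of calls to \eqref{eq.w}, is $\mathcal{O}(N)$; equivalently, each re-examined index can be charged to the unique constraint that becomes permanently tight at the corresponding return, and there are at most $2N$ such constraints. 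Combining (i)--(iii), the overall arithmetic is a constant amount per epoch, giving the claimed $\mathcal{O}(N)$ complexity. I expect the amortization in (iii) to be the main obstacle, since it is exactly here that the naive ``restart from scratch'' reading of \textsc{FirstChangeW} must be replaced by the stack-based forward sweep to certify strict linearity.
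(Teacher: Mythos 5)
Your proposal addresses only half of what the proposition (and the paper's Appendix~C) actually establishes, and the half it does address is proved for a modified procedure rather than for Algorithm~1 as written. First, recall that $P_{i,k}^{(q)}$ denotes the \emph{optimal} power of user $k$ in iteration $q$, so ``Algorithm~1 can compute $P_{i,k}^{(q)}$'' is foremost a correctness claim, and this is where the paper spends most of its effort: it partitions the algorithm's output into constant-water-level phases, explicitly constructs Lagrange multipliers from the differences of reciprocal water-levels across consecutive phases (a positive $\lambda_{\tau_j,k}^{(q)}$ when the causality constraint is tight at $t_{\tau_j}$, a positive $\mu_{\tau_j,k}^{(q)}$ when the non-overflow constraint is tight), checks complementary slackness, and verifies that the induced $\theta_{i,k}^{(q)}$ reproduces the stationarity condition \eqref{eq.kkt1r}; only after certifying that the output satisfies the sufficient and necessary conditions \eqref{eq.kkt1r}--\eqref{eq.kkt3r} does it count operations. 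You instead take as given that Algorithm~1 ``reconstructs the piecewise-constant water-level profile guaranteed by Lemma~\ref{lemma2}.'' That assumes exactly what must be proven: Lemma~\ref{lemma2} describes the structure of the optimum, it does not say that the greedy min/max selection rules inside FirstChangeW output a profile satisfying the KKT conditions. Without the multiplier construction (or an equivalent verification), you are counting the operations of an algorithm whose output you have not shown to be $P_{i,k}^{(q)}$ at all.

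On the complexity half, your diagnosis of the re-scanning issue is genuinely sharper than the paper's own argument, which disposes of the bound in two sentences (at most $N$ arrival times tested per search, at most $N$ water-level points searched) without confronting the fact that a call to FirstChangeW may detect a conflict at an index $n^\star$ well beyond the returned changing point $\tau$, so that epochs $\tau+1,\dots,n^\star$ are scanned again by the next call. But your remedy---recasting the sweep as a stack-based, pool-adjacent-violators-style merge---analyzes a different algorithm. To prove the proposition as stated you must either bound the work of the pseudocode as given, or additionally show that your merging sweep produces the identical output; you do neither, and you concede as much in your closing sentence. A further soft spot: your $\mathcal{O}(1)$ per-index claim rests on accumulating the partial sums $\sum_{i\le n} g^{-1(q)}(1/\omega)L_i$ incrementally, but the bisection in \eqref{eq.w} changes $\omega$ at every step, and since the equivalent link is time-varying (the per-epoch $g^{-1(q)}$ differ), sums evaluated at one $\omega$ cannot be reused at another; both you and the paper effectively count each solve of \eqref{eq.w} as a unit operation, but your justification for doing so does not hold up. So as it stands the proposal neither certifies correctness nor bounds the complexity of the stated algorithm.
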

\begin{proof}
See Appendix~C.
\end{proof}
%
%

\subsection{Block Coordinate Ascent}\label{sec_3_c}

Define $\boldsymbol{\mathcal{E}}:=\{ \boldsymbol{\mathcal{E}}_1, \ldots, \boldsymbol{\mathcal{E}}_K \}$. Building on Algorithm 1, we propose to solve \eqref{eq.p1} by calling Procedure {\bf Iteration}($\boldsymbol{\mathcal{E}}$, $\boldsymbol{\mathcal{L}}$) in Algorithm~2.

\begin{algorithm}[h]
\caption{Block Coordinate Ascent}
\begin{algorithmic}[1]

\Procedure {Iteration}{energy set $\boldsymbol{\mathcal{E}}$, length set $\boldsymbol{\mathcal{L}}$}
	\State select a tolerance level $\epsilon > 0$, and $q=1$;
	\State find a set of feasible $\{ P_{i,k}^{(0)} \}$, $\forall i$, $\forall k$;
	\State calculate initial throughput $W^{(0)}$;
	
	\While {$1 > 0$}
		\For {$k = 1$ to $K$}
		\State calculate $P_{i,k}^{(q)}$, $\forall i$ by calling Procedure {\bf Schedule}($\boldsymbol{\mathcal{E}}_k$, $\boldsymbol{\mathcal{L}}$, $k$);
		\EndFor
		
		\State calculate throughput $W^{(q)}$;
		
		\If {$ (W^{(q)}-W^{(q-1)})/W^{(q-1)} \leq \epsilon$}
			\State $\boldsymbol{P}_i^*=[P_{i,1}^{(q)}, \ldots, P_{i, K}^{(q)}]$, $\forall i$;
			\State {\bf break};
		\EndIf
		
		\State $q = q+1$;
	\EndWhile
\EndProcedure
\end{algorithmic}
\end{algorithm}

Per iteration $q$ in Algorithm 2, we sequentially optimize the users' transmit-powers one-by-one using the water-level based string-tautening procedure in Algorithm~1. Based on $\boldsymbol{P}_i^{(q)} = [P_{i,1}^{(q)}, P_{i,2}^{(q)}, \ldots, P_{i,K}^{(q)}]$, we solve \eqref{eq.Rp} to compute $R(\boldsymbol{P}_i^{(q)})$ per Lemma 1. Then we calculate the sum-throughput $W^{(q)}=\sum_{i=1}^N [R(\boldsymbol{P}_i^{(q)})L_i]$ and compare it with $W^{(q-1)}$ from last iteration. The iteration is terminated when the increment of sum-throughput is less than the tolerance level. Since Algorithm~2 in fact follows the classic block coordinate ascent method, it converges to at least a local optimum. As (\ref{eq.p1}) is a convex problem, every local optimum is global optimum; i.e., Algorithm~2 converges to the globally optimal $\boldsymbol{P}^*$ for (\ref{eq.p1}). Having obtained $\boldsymbol{P}^*$, we can consequently find the optimal transmit-covariance matrices $\{\boldsymbol{Q}_{i,k}^*\}$ by solving (\ref{eq.Rp}) with $P_{i,k} \equiv P_{i,k}^*$, $\forall i,k$ for the MIMO MAC. Summarizing, we have:
\begin{theorem}
{\it The proposed block coordinate ascent approach is guaranteed to find the globally optimal MIMO MAC transmission policy for (\ref{eq.p}).}
\end{theorem}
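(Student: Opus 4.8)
The plan is to prove the theorem in three stages: first establish that \eqref{eq.p} and the power-allocation reformulation \eqref{eq.p1} are equivalent, so that solving \eqref{eq.p1} suffices; second verify that \eqref{eq.p1} meets the structural hypotheses under which exact block coordinate ascent provably attains the \emph{global} optimum, not merely a local one; and third confirm that each block update performed by Algorithm~1 is exactly such an exact maximizer. The equivalence in the first stage follows directly from the definition $R(\boldsymbol{P}_i):=\max_{\boldsymbol{r}^M(\boldsymbol{Q}_i)\in\mathcal{C}_{\text{MAC}}(\boldsymbol{P}_i;\mathcal{H})}\sum_k w_k r_k^M(\boldsymbol{Q}_i)$ together with Lemma~\ref{lemma1}: for any $\boldsymbol{Q}$ feasible in \eqref{eq.p} with induced power budgets $P_{i,k}=\mathrm{tr}(\boldsymbol{Q}_{i,k})$, the weighted-sum objective is bounded above by $\sum_i R(\boldsymbol{P}_i)L_i$, and this bound is attained by the covariance matrices that solve \eqref{eq.Rp} at those budgets. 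Hence the two problems share the same optimal value, and any optimal $\boldsymbol{P}^*$ of \eqref{eq.p1} lifts to an optimal $\boldsymbol{Q}^*$ of \eqref{eq.p} by solving \eqref{eq.Rp} epoch-by-epoch with $P_{i,k}\equiv P_{i,k}^*$.

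For the second stage I would record the structural facts that make coordinate ascent globally convergent here. By Lemma~\ref{lemma1} each $R(\boldsymbol{P}_i)$ is strictly concave and differentiable (the partial derivative $g$ exists), so the objective $\sum_i R(\boldsymbol{P}_i)L_i$ is a strictly concave, continuously differentiable function and \eqref{eq.p1} is a convex program. The decisive observation is that the feasible set is a Cartesian product across the $K$ user blocks: constraints (C1), (C2) and the nonnegativity constraints each involve only the powers $\{P_{i,k}\}_{i=1}^N$ of a single user $k$, so the feasible region factors as $X_1\times\cdots\times X_K$, where $X_k$ is the closed convex per-user polytope defined by (C1), (C2) and $P_{i,k}\geq 0$. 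Restricted to block $k$ with the other blocks held fixed, the objective is strictly concave in $(P_{1,k},\ldots,P_{N,k})$ since each summand depends on a distinct coordinate of that block, so the block subproblem has a unique maximizer.

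The third stage identifies that maximizer with the output of Algorithm~1. With $\boldsymbol{P}_{i,-k}$ fixed, the stationarity condition \eqref{eq.kkt1r} and the complementary-slackness conditions \eqref{eq.kkt2r}--\eqref{eq.kkt3r} are precisely the KKT system of the single-user block subproblem; Lemma~\ref{lemma2} shows its solution has the water-filling form $P_{i,k}^{(q)}=g^{-1(q)}(1/\omega_{i,k}^{(q)})$ with a water-level that changes only where a causality or non-overflow constraint is tight, and the Proposition guarantees Algorithm~1 returns exactly this solution in $\mathcal{O}(N)$ time. Thus each inner loop of Algorithm~2 performs an exact block maximization. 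Combining the three stages, the iterates generate a monotonically nondecreasing sequence $W^{(q)}$ that is bounded above (finite harvested energy caps the achievable rate), while the power iterates remain in the compact set $X_1\times\cdots\times X_K$; by the standard convergence theorem for block coordinate ascent over a Cartesian product of closed convex sets with a continuously differentiable objective and unique block maximizers, every limit point is a stationary point of \eqref{eq.p1}, which, by convexity, is a global maximizer.

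The main obstacle I anticipate is the convergence step, and specifically justifying that exact coordinate ascent does not stall at a non-optimal point. For general convex programs block coordinate methods can fail to reach even a stationary point, so the argument cannot rest on convexity alone; it genuinely needs the Cartesian-product structure of the feasible set together with differentiability and the uniqueness of each block maximizer, both supplied by the strict concavity in Lemma~\ref{lemma1}. Verifying these regularity hypotheses — rather than the routine assembly of Lemmas~\ref{lemma1} and~\ref{lemma2} and the Proposition — is where the real care is required, after which invoking the classical coordinate-ascent convergence result closes the proof.
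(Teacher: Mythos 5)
Your proof is correct and follows essentially the same route as the paper: reduce \eqref{eq.p} to the convex power-allocation problem \eqref{eq.p1} via Lemma~\ref{lemma1}, observe that each inner step of Algorithm~2 is an exact block maximization (Lemma~\ref{lemma2} and Proposition~1), and conclude global optimality from convexity, recovering the optimal covariances by solving \eqref{eq.Rp} at the optimal powers. The only difference is that you are more careful than the paper on the convergence step: where the paper simply asserts that classic block coordinate ascent converges to at least a local optimum, you verify the hypotheses that make this assertion valid here---the per-user (Cartesian-product) structure of constraints (C1), (C2), differentiability of the objective, and uniqueness of each block maximizer from strict concavity---which is a genuine strengthening of the paper's argument rather than a departure from it.
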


The proposed Algorithm 2 is initialized by a set of feasible $\{ P_{i,k}^{(0)} \}$. Note that Algorithm 2 is guaranteed to converge to $\boldsymbol{P}^*$ with any feasible $P_{i,k}^{(0)}$ per Theorem 1. However, the convergence speed may vary when selecting different initial points. We envision that a good initialization can be obtained as follows. Treat the MIMO MAC as $K$ decoupled ``point-to-point'' links between each user and the access point. Then calculate the optimal power allocation $\{ \tilde{P}_{i,k}^*, \forall i\}$ per link $k$. Note that for the decoupled time-invariant link, we just need a simpler power-based (instead of water-level based) ``string-tautening'' algorithm to find $\{ \tilde{P}_{i,k}^*, \forall i\}$; this can be vividly illustrated by the trajectory of letting a string tie its one end at the origin $(0,0)$, pass the other end through $(T, \sum_{i=0}^{N-1}{E_{i,k}})$, and then tauten between user $k$'s own energy arrival curve and minimum departure curve \cite{Wang14, Zafer&Modiano2005}. Since each set of $\{ \tilde{P}_{i,k}^*, \forall i\}$ satisfies the energy causality and non-overflow constraints with user $k$, it is clear that the resultant $\boldsymbol{\tilde{P}}_i^*:=[\tilde{P}_{i,1}^*,\ldots,\tilde{P}_{i,K}^*]$ is feasible for (\ref{eq.p1}). Interestingly, simulations will show the sum-throughput achieved by $\{ \boldsymbol{\tilde{P}}_i^* \}$ is always very close to that with the optimal $\{ \boldsymbol{P}_i^* \}$; hence, it provides a good initial point.

\subsection{Visualization of the Optimal Policy}

The optimal policy obtained by Algorithm~2 can be visualized by using a calculus approach similar to the ones in \cite{Zafer&Modiano2005, Wang12, Nan16}, which was developed for energy-efficient transmissions of bursty data arrivals with delay constraints. Mimicking the definitions of data arrival and minimum departure curves in \cite{Zafer&Modiano2005, Wang12, Nan16}, we define the energy arrival curve $A_k(t)$ and minimum energy departure curve $D_{\text{min}, k}(t)$ of user $k$ as: $\forall k$,
\begin{equation}
\begin{array}{cl}
\displaystyle A_k(t) = \sum_{i = 0}^{N - 1} \left [ E_{i, k} u(t - t_i) \right ], & 0 \leq t \leq T, \\
\displaystyle D_{\text{min}, k}(t) = \left ( \sum_{i = 0}^{N - 1} \left [ E_{i, k} u(t - t_i) \right ] \right )^+, & 0 \leq t \leq T,
\end{array}
\end{equation}
where $u(t)$ is the unit-step function: $u(t) = 1$ if $t \geq 0$, and $u(t) = 0$ otherwise. Consider a piece-wise linear energy departure curve: $\forall k$,
\begin{equation}\label{eq.D_k}
D_k(t) = \sum_{i = 1}^{n - 1} P_{i, k} L_i + P_{n, k} (t - t_{n - 1}), \quad \forall n \in [1, N],
\end{equation}
where the user powers $P_{i, k}$ per epoch serve as the piece-wise non-negative slopes for $D_k(t)$. As mentioned previously, we can obtain the initial energy departure curve $\tilde{D}_k(t)$ of user $k$ by the trajectory of letting a string tie its one end at the origin $(0, 0)$, pass the other end through $(T, \sum_{i = 0}^{N - 1} E_{i, k})$, and then tauten between $A_k(t)$ and $D_{\text{min}, k}(t)$; see the two-user example in Fig.~\ref{fig.energy_departure}. Consequently, the initial user powers $\{ \tilde{P}_{i,k}^* \}$ (or $\{ P_{i,k}^{(0)} \}$) can be derived from $D^{(0)}_k(t)$, $\forall k$.

Substituting the optimal user powers $\{ P^*_{i, k} \}$, $\forall i$, $\forall k$ obtained by Algorithm~2 into \eqref{eq.D_k}, we produce the optimal energy departure curves $D^*_k(t)$ of the $K$ users; see Fig.~\ref{fig.energy_departure}. The differences between the two policies are obviously seen. Recall that $\tilde{D}_k(t)$ is also regarded as the optimal transmission policy for the decoupled point-to-point link between the $k$th user and the access point. In this policy, the transmit power $\tilde{P}^*_{i, k}$ of user $k$ only changes at its {\it non-zero} energy arrival instants \cite{SECON14}. For example, as can be seen in Fig.~\ref{fig.energy_departure}, $\tilde{P}^*_{i, 2}$ only changes at $t_4 = 8$, $t_5 = 9$ and $t_6 = 4$, where the corresponding amounts of arriving energy $E_{i, 2} > 0$. As for $D^*_k(t)$, however, since it concerns the coupling between multiple energy harvesting processes, the transmit power $P^*_{i, k}$ of a user can change at both the non-zero energy arrival instants of its own and those of all other users.

\begin{figure}[h]
\centering
\includegraphics[width=3.6in]{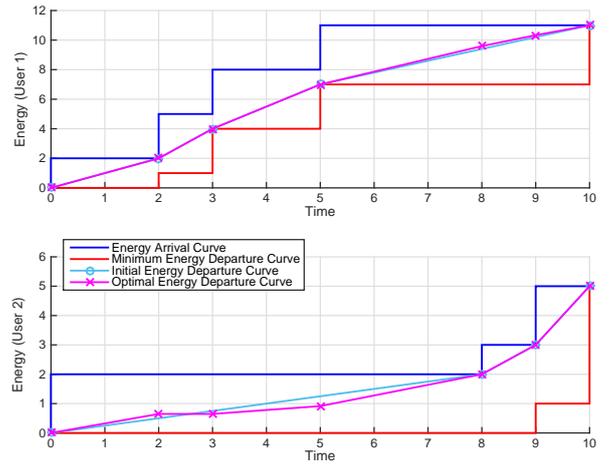}
\caption{Energy arrival, minimum energy departure, initial energy departure and optimal energy departure curves of the users.}\label{fig.energy_departure}
\end{figure}

\subsection{Heuristic Online Scheme}

The near-optimality of the decoupled ``power-tautening'' method motivates us to develop a heuristic online scheme. Suppose that the time-invariant $\mathcal{H}$ is known, and the harvested energy of each user is modeled by a compound Poisson process, where the number of energy arrivals over a horizon $T$ follows a Poisson distribution with mean $\lambda_e$ and the energy amount in each arrival is independent and identically distributed (i.i.d.) with mean $\bar{E}_k$ \cite{Xu13}. It is assumed that $\lambda_e$ and $\bar{E}_k$ are known a priori or that they can be estimated on-the-fly by the average energy arrival interval and amount from previous energy harvesting process.

While the initial amounts of energy $E_{0,k}$, $\forall k$ are known, all $E_{i,k}$, $i=1, \ldots, n-1$, $\forall k$, and $L_i$, $i=1, \ldots, n$, are clearly non-causal information that is seldom available a-priori in practice. However, it is worth noting that $\sum_{i=1}^{n-1}E_{i,k}$ is the total energy harvested by user $k$ during $(0,t_n)$, where $t_n = \sum_{i=1}^n L_i$. Then, given $\lambda_e$ and $\bar{E}_k$, we have:
\begin{equation}\label{eq.approx}
\sum_{i=1}^n L_i \approx \frac{n}{\lambda_e}
\end{equation}
and
\begin{equation}\label{eq.approx2}
\frac{\sum_{i=1}^{n-1}E_{i,k}}{\sum_{i=1}^n L_i} \approx \frac{(n-1) \bar{E}_k}{n/\lambda_e } = \frac{n-1}{n} \lambda_e \bar{E}_k,
\end{equation}
where the approximation becomes exact as $t_i \rightarrow \infty$. Let $P_{n,k}^+$ and $P_{n,k}^-$ denote the constant power to make the $n$th causality and non-overflow constraints of user $k$ become tight at $t_n$, respectively. Using (\ref{eq.approx})--(\ref{eq.approx2}), we can then approximate:
\begin{equation}\label{eq.app1}
P_{n,k}^+ \approx \lambda_e \frac{E_{0,k} + (n-1) \bar{E}_k}{n},
\end{equation}
\begin{equation}\label{eq.app2}
P_{n,k}^- \approx \lambda_e \frac{ (E_{0,k} + n \bar{E}_k-E_{\max,k})^+}{n}.
\end{equation}
Using the approximations in (\ref{eq.app1}) and (\ref{eq.app2}), we can implement the power-tautening method in the same spirit with Algorithm~1 to produce the first set of powers adopted by the users at $t_0$. This set of powers is kept in use before new energy arrives at $t_1$. Then we treat $t_1$ as new ``$t_0$'', and update the initial $E_{0,k}$ with the sum of unconsumed energy (if any) and newly arriving energy amount $E_{1,k}$, $\forall k$. Relying on (\ref{eq.app1})--(\ref{eq.app2}), the algorithm is run again to yield the next set of user powers. This process continues until all energy is used up or the end of transmission period $T$ is reached.

\section{Numerical Results}\label{sec4}

\begin{figure}[h]
\centering
\includegraphics[width=3.6in]{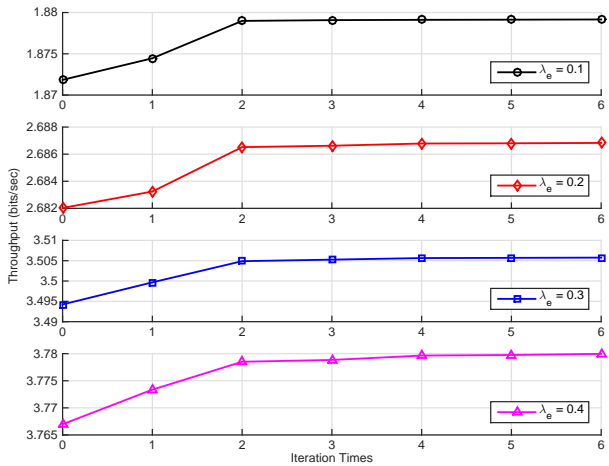}
\caption{Average throughput after each time of iteration.}\label{convergence}
\end{figure}

For a time-invariant MIMO MAC with $K=2$ users, consider data transmission over $T=20$ seconds. The weight vector $\boldsymbol{w}=[1, 1]$, and each element in channel matrix $\boldsymbol{H}_{k}$, $k=1,2$, is a zero-mean complex Gaussian random variable with unit variance. The battery capacity of each user is $E_{\text{max,1}}=E_{\text{max,2}}=10$ Joules. Assume each user's energy harvesting setup is modeled by a compound Poisson process with the same mean $\lambda_e$. The amount of energy in each arrival is assumed to be independent and uniformly distributed with mean $5$ Joules. 
Fig.~\ref{convergence} shows the average throughput after each iteration of the proposed Algorithm 2 when $(N_t,N_r)$ is set to $(2,2)$, and $\lambda_e=0.1$, $0.2$, $0.3$, or $0.4$ $\text{sec}^{-1}$. 
The fast convergence of Algorithm 2 is clearly observed: for each $\lambda_e$ value, the proposed algorithm could always converge to the optimum in 5 iterations.

Fig.~\ref{throughput_1} and Fig.~\ref{throughput_2} show the average throughputs versus $T$ for two different values of $\lambda_e = 0.1$ and $0.3$ $\text{sec}^{-1}$, respectively, where each result is obtained as the average of 40 randomly generated trial cases. We compare the performance of the proposed Algorithm 2 with that of the decoupled power-tautening (P-Tautening) scheme, and the proposed online scheme. In addition, we also include the performance of two other feasible schemes for comparison: Causality-Satisfied and Non-Overflow, which are obtained by always selecting each user's power to meet its next causality and non-overflow constraints, respectively. The sum-throughputs of all the five schemes slightly increase as the transmit duration $T$ extends from $10$ to $50$ sec. With full energy-harvesting information available a-prior, Algorithm 2 always provides the optimal benchmark for all values of $T$ in both cases. It is interestingly observed that the decoupled P-Tautening scheme, in which each user only requires its own energy-harvesting information, could achieve a throughput that is over $98\%$ of the optimal benchmark, as accurately revealed by Table~\ref{performance}. This observation just corroborates the merit of our initial point selecting scheme in Section~\ref{sec_3_c}. Requiring only the next (non-causal) energy arrival information, Causality-Satisfied and Non-Overflow schemes both incur roughly over $0.3$ bits/sec throughput loss. The proposed online scheme could always attain a fairly good $85\%$ of the optimal sum-throughput for all $T$ values, even though only causal information is required by each user.

\begin{figure}[t!]
\centering
\includegraphics[width=3.6in]{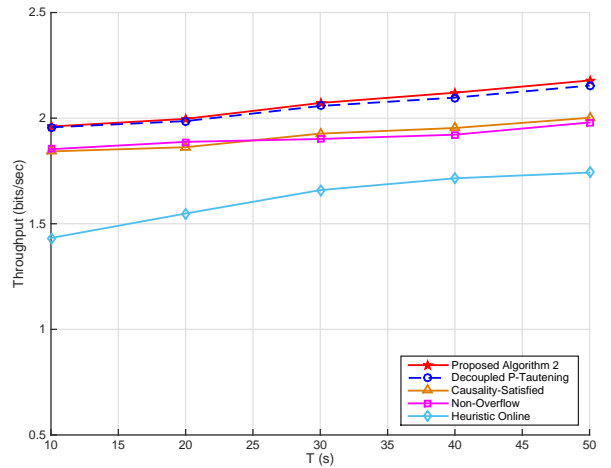}
\caption{Average throughput vs. transmit duration $T$ ($\lambda_e = 0.1$, $N_r = 2$).}\label{throughput_1}
\end{figure}
\begin{figure}[t!]
\centering
\includegraphics[width=3.6in]{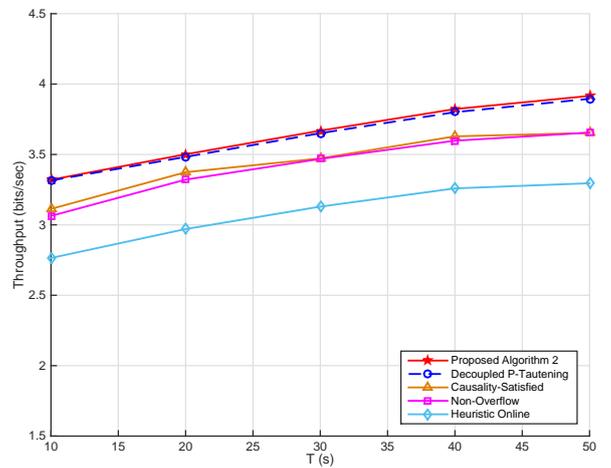}
\caption{Average throughput vs. transmit duration $T$ ($\lambda_e = 0.3$, $N_r = 2$).}\label{throughput_2}
\end{figure}

For further illustration, we next depict the influence of the energy arrival rate on the sum-throughputs in Fig.~\ref{throughput_3} and Fig.~\ref{throughput_4}. We test two scenarios: $(N_t, N_r) = (2, 2)$ and $(N_t, N_r) = (2, 4)$ for all the five transmit schemes with a fixed transmit duration $T = 20$ sec. It is clearly seen that the sum-throughputs of all the five schemes increase as $\lambda_e$ increases, since there can be more energy available at the users when $\lambda_e$ becomes larger. For small $\lambda_e$ values, the performance of the decoupled P-Tautening scheme almost reaches the optimal benchmark, while as $\lambda_e$ increases, without concerning the coupling effect of users, it gradually becomes sub-optimal. It is also observed that the sum-throughput is significantly improved for the MIMO MAC as the number of receive antennas $N_r$ doubles.

\begin{figure}[t!]
\centering
\includegraphics[width=3.6in]{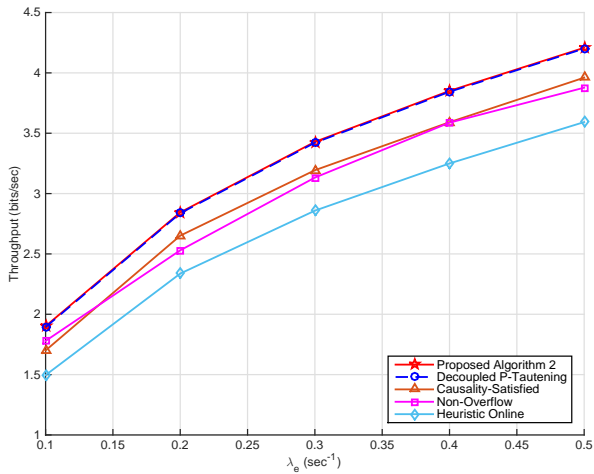}
\caption{Average throughput vs. energy arrival rate $\lambda_e$ ($T = 20$, $N_r = 2$).}\label{throughput_3}
\end{figure}
\begin{figure}[t!]
\centering
\includegraphics[width=3.6in]{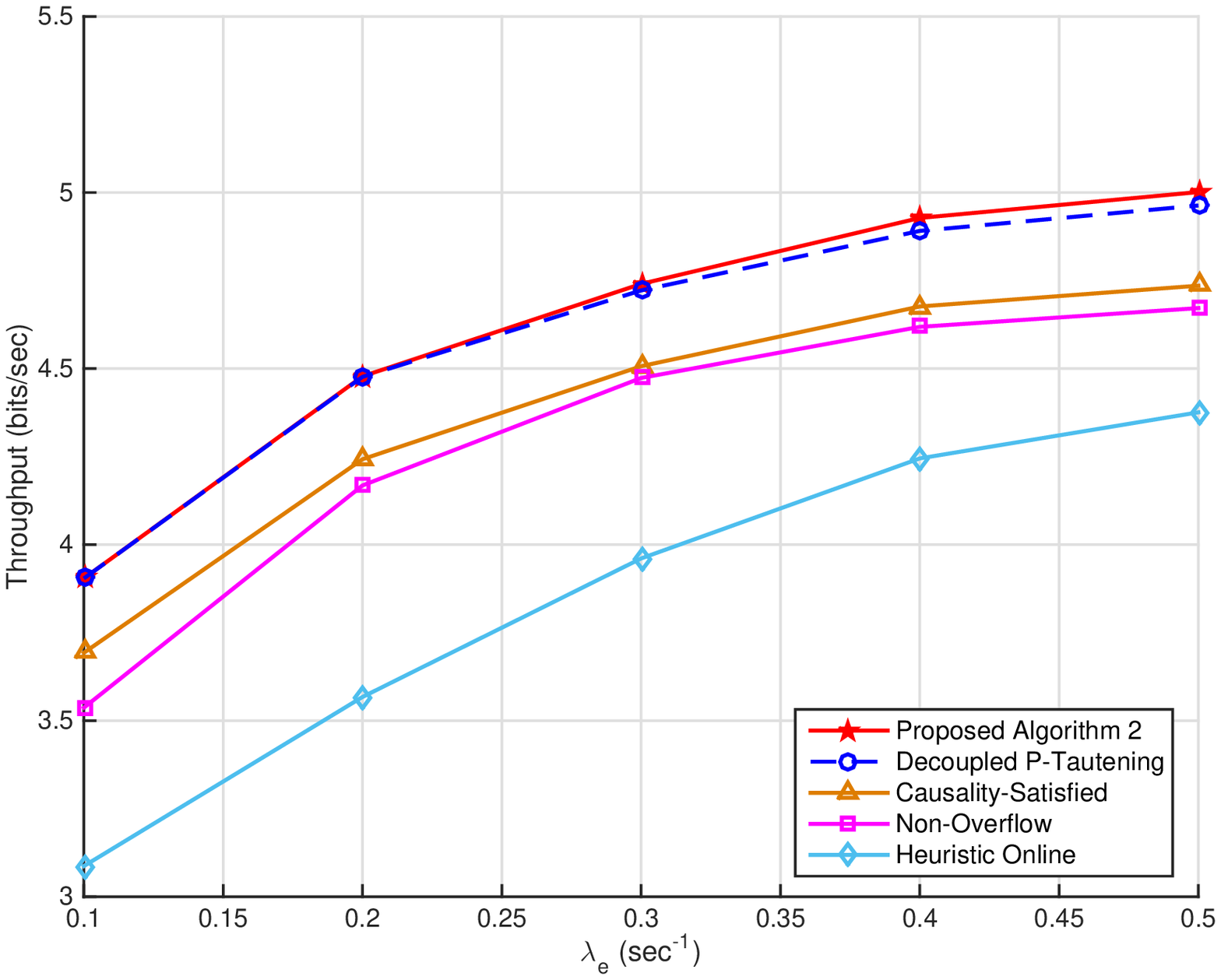}
\caption{Average throughput vs. energy arrival rate $\lambda_e$ ($T = 20$, $N_r = 4$).}\label{throughput_4}
\end{figure}
\begin{table}[b]
\caption{Average throughputs of Algorithm~2 and Decoupled P-Tautening vs. $T$.}\label{performance}
\centering
\begin{tabular}{r | r r | r r}
\toprule
\multirow{2}{*}{$T$ (sec)} & Algorithm~2 & P-Tautening & Algorithm~2 & P-Tautening \\
& \multicolumn{2}{c |}{$\lambda_e = 0.1$ $\text{sec}^{-1}$} & \multicolumn{2}{c}{$\lambda_e = 0.3$ $\text{sec}^{-1}$} \\
\midrule
10 & 1.9603 & 1.9564 & 3.3204 & 3.3150 \\
20 & 1.9967 & 1.9859 & 3.5012 & 3.4828 \\
30 & 2.0719 & 2.0576 & 3.6691 & 3.6510 \\
40 & 2.1205 & 2.0972 & 3.8225 & 3.8007 \\
50 & 2.1784 & 2.1541 & 3.9159 & 3.8953 \\
\botrule
\end{tabular}
\end{table}

\section{Concluding Remarks}\label{sec5}

We proposed a novel approach to obtaining the optimal transmission policy for energy harvesting powered MIMO MACs. An efficient algorithm was developed to find the optimal offline solution with a low computational complexity. Our approach can provide the optimal benchmarks for practical schemes. Development of online scheme based on the revealed optimal structure was also discussed.

In our proposed approach, the energy harvesting processes were modeled as deterministic processes, where the amount of each energy arrival is accurately known a-priori. Due to the unpredictable and intermittent nature of the renewable energy sources, the amount of harvested energy may not always be a-priori available in practice \cite{Hu15, Hu16}. Taking this uncertainty into account, we may obtain a result of improved practical significance. This will be an interesting direction to pursue in our future work.

As an alternative promising solution for achieving energy efficiency, the smart-grid industry has seen fast growth in the past decade \cite{ChenMag17}. It is expected that future wireless communication systems will be equipped with energy harvesting devices and powered by smart-grids. Energy management strategies of smart-grid powered coordinated multi-point (CoMP) systems and MIMO broadcasting systems have been explored in \cite{JSAC16_1, JSAC16_2, Wang17}. Building on the proposed approach, system design and power control of smart-grid involved MIMO MACs are also worth investigating.

\section{Appendices}\label{sec14}

\subsection{Proof of Lemma~1}

To show the strict concavity of $R(\boldsymbol{P}_i)$, consider the Lagrangian of (\ref{eq.Rp}):
\begin{equation}
L(\boldsymbol{Q}_i,\boldsymbol{\lambda})=f(\boldsymbol{Q}_i)-\sum_{k=1}^K\lambda_k \left (\text{tr}(\boldsymbol{Q}_{i,k})-P_{i,k} \right ),
\end{equation}
where $f(\boldsymbol{Q}_i)$ denotes the objective function of (\ref{eq.Rp}). Since (\ref{eq.Rp}) is convex program, we have:
\begin{equation}\nonumber
R(\boldsymbol{P}_i)=\min_{\boldsymbol{\lambda}}\max_{\boldsymbol{Q}_i}\left [f(\boldsymbol{Q}_i)-\sum_{k=1}^K\lambda_k \left (\text{tr}(\boldsymbol{Q}_{i,k})-P_{i,k} \right ) \right ].
\end{equation}

For a given $\boldsymbol{P}_i$, let $\boldsymbol{Q}_i^*(\boldsymbol{P}_i)$ and $\boldsymbol{\lambda}^*(\boldsymbol{P}_i)$  denote the optimal primal and dual variables for (\ref{eq.Rp}). For $\tilde{\boldsymbol{P}}_i,\hat{\boldsymbol{P}}_i$, and $\check{\boldsymbol{P}}_i=\beta\tilde{\boldsymbol{P}}_i+(1-\beta)\hat{\boldsymbol{P}}_i$ with $\beta \in [0,1]$, we have:
\begin{equation}\label{eq.apd1}
\begin{aligned}
R(\tilde{\boldsymbol{P}}_i) = & f(\boldsymbol{Q}_i^*(\tilde{\boldsymbol{P}}_i))-\sum_{k}\lambda_k^*(\tilde{\boldsymbol{P}}_i)\left [\text{tr}(\boldsymbol{Q}_{i,k}^*(\tilde{\boldsymbol{P}}_i))-\tilde{P}_{i,k} \right ] \\
\leq & {f(\boldsymbol{Q}_i^*(\tilde{\boldsymbol{P}}_i))-\sum_{k}\lambda_k^*(\check{\boldsymbol{P}}_i)\left [\text{tr}(\boldsymbol{Q}_{i,k}^*(\tilde{\boldsymbol{P}}_i))-\tilde{P}_{i,k} \right ]} \\
\leq & {f(\boldsymbol{Q}_i^*(\check{\boldsymbol{P}}_i))-\sum_{k}\lambda_k^*(\check{\boldsymbol{P}}_i)\left [\text{tr}(\boldsymbol{Q}_{i,k}^*(\check{\boldsymbol{P}}_i))-\tilde{P}_{i,k} \right ]}.
\end{aligned}
\end{equation}
As for \eqref{eq.apd1}, the first inequality follows from that $\boldsymbol{\lambda}^*(\tilde{\boldsymbol{P}}_i)=\min_{\lambda}[f(\boldsymbol{Q}_i^*(\tilde{\boldsymbol{P}}_i))-\sum_{k}\lambda_k(\text{tr}(\boldsymbol{Q}_{i,k}^*(\tilde{\boldsymbol{P}}_i))-\tilde{P}_{i,k})]$,
and the second inequality follows from that $\boldsymbol{Q}_i(\check{\boldsymbol{P}}_i)=\max_{\boldsymbol{Q}}[f(\boldsymbol{Q}_i)-\sum_{k}\lambda_k^*(\check{\boldsymbol{P}}_i)\text{tr}(\boldsymbol{Q}_{i,k})]$.
Similarly, we have:
\begin{align}
   \nonumber
   R(\hat{\boldsymbol{P}}_i)
\leq{f(\boldsymbol{Q}_i^*(\check{\boldsymbol{P}}_i))-\displaystyle\sum_{k}\lambda_k^*(\check{\boldsymbol{P}}_i) \left [\text{tr}(\boldsymbol{Q}_{i,k}^*(\check{\boldsymbol{P}}_i))-\hat{P}_{i,k} \right ]}.
\end{align}
It then follows that:
\[
\begin{aligned}
& \beta{R(\tilde{\boldsymbol{P}}_i)}+(1-\beta)R(\hat{\boldsymbol{P}}_i) \\ \leq & f(\boldsymbol{Q}_i^*(\check{\boldsymbol{P}}_i))-\displaystyle\sum_{k}\lambda_k^*(\check{\boldsymbol{P}}_i) \left [\text{tr}(\boldsymbol{Q}_i^*(\check{\boldsymbol{P}}_i)) \right . \\
& ~~~~~~~~~~~~~~~~~~~~~~~~~~~~~\left . -\beta\tilde{P}_{i,k}+(1-\beta)\hat{P}_{i,k} \right ] \\
= & {f(\boldsymbol{Q}_i^*(\check{\boldsymbol{P}}_i))-\displaystyle\sum_{k}\lambda_k^*(\check{\boldsymbol{P}}_i)\left [\text{tr}(\boldsymbol{Q}_{i,k}^*(\check{\boldsymbol{P}}_i))-\check{P}_{i,k} \right ]} \\
= & R(\check{\boldsymbol{P}}_i).
\end{aligned}
\]
In fact, it can be shown that the strict inequality holds if $\beta\in{(0,1)}$ due to the strict concavity of log function in $f(\boldsymbol{Q}_i)$.

\subsection{Proof of Lemma~2}

Due to the strict concavity and increasing of $R(\boldsymbol{P}_i)$, it clearly follows that we should have $g^{(q)}(P_{i,k})=\theta_{i, k}^{(q)}$, leading to: $P_{i,k}^{(q)}=g^{-1(q)}(\theta_{i, k}^{(q)})$.

Clearly $P_{i,k}^{(q)}$ changes only when water-level $\omega_{i,k}^{(q)}$ changes its value. By the definition $\theta_{i,k}^{(q)} := \sum_{n=i}^N \lambda_{n,k}^{(q)}- \sum_{n=i}^{N-1} \mu_{n,k}^{(q)}$, we have: $1/\omega_{1,k}^{(q)}  = 1/(\sum_{n=1}^N \lambda_{n,k}^{(q)} - \sum_{n=1}^{N-1} \mu_{n,k}^{(q)})$; and $\omega_{N,k}^{(q)} = 1/\lambda_{N,k}^{(q)}$. If $\lambda_{n,k}^{(q)} = \mu_{n,k}^{(q)}=0$, $\forall n = 1, \ldots, N-1$, then a constant $\omega_{i,k}^{(q)} = 1/\lambda_{N,k}^{(q)}$ is maintained over all epoches. We have a change of $\omega_{i,k}^{(q)}$ at $t_n$ only when $\lambda_{n,k}^{(q)}>0$ or $\mu_{n,k}^{(q)}>0$ for a certain $n$.

From the complementary slackness conditions (\ref{eq.kkt2p})--(\ref{eq.kkt3p}), $\lambda_{n,k}^{(q)}>0$ implies that $\sum_{i=1}^{n} (P_{i,k}^{(q)} L_i) = E_{n,k}^a$, whereas $\mu_{n,k}^{(q)}>0$ implies that $\sum_{i=1}^{n}(P_{i,k}^{(q)} L_i) = E_{n,k}^c$. Note that we cannot have both $\lambda_{n,k}^{(q)}>0$ and $\mu_{n,k}^{(q)}>0$ for the same $n$. If $\lambda_{n,k}^{(q)}>0$, we have $1/\omega_{n+1,k}^{(q)} - 1/\omega_{n,k}^{(q)} = -\lambda_{n,k}^{(q)}<0$. This implies that $\omega_{n+1,k}^{(q)} >\omega_{n,k}^{(q)}$. Similarly, if $\mu_{n,k}^{(q)}>0$, we have $1/\omega_{n+1,k}^{(q)} - 1/\omega_{n,k}^{(q)} =\mu_{n,k}^{(q)}>0$, implying $\omega_{n+1,k}^{(q)} <\omega_{n,k}^{(q)}$.

\subsection{Proof of Proposition~1}

Due to the rules used in the function FirstChangeW, we can prove that the water-level changing pattern in the transmission policy produced by Algorithm~1 is consistent with the structure revealed in Lemma~\ref{lemma2}, i.e., 1) if the water-level in use by user $k$ is first $\omega_k$ and then changed to $\tilde{\omega}_k$ at $t_\tau$ where $\sum_{i=1}^{\tau} (g^{-1 (q)}(1/\omega_k) L_i) = E_{n,k}^a$, then we must have $\tilde{\omega}_k > \omega_k$; and 2) if the water-level is changed at $t_{\tau}$ where $\sum_{i=1}^{\tau} (g^{-1 (q)}(1/\omega_k L_i) = E_{n,k}^c$, then we must have the next water-level $\tilde{\omega}_k < \omega_k$.

Suppose w.l.o.g that the water-level of user $k$ changes $J$ times in $\mathcal{W}_k^{(q)} := \{ \omega_{i, k}^{(q)}, i = 1, \ldots, N \}$ yielded by Algorithm~1. We divide the policy into $J + 1$ phases: water-level $\omega_{i, k}^{(q)} = \check{\omega}_{1, k}$ over epoches $i \in [1, \tau_1]$, $\omega_{i, k}^{(q)} = \check{\omega}_{2, k}$ over epoches $i \in [\tau_1 + 1, \tau_2]$, \ldots, $\omega_{i, k}^{(q)} = \check{\omega}_{J + 1, k}$ over epoches $i \in [\tau_J + 1, N]$. We can then construct a set of Lagrange multipliers $\boldsymbol{\Lambda}^{(q)}_k := \{ \lambda_{n, k}^{(q)}, \mu_{n, k}^{(q)}, n = 1, \ldots, N \}$ as follows:

Let $\lambda_{N, k}^{(q)} = g^{(q)}(\check{P}_{J + 1, k}) = 1/\check{\omega}_{J + 1, k} > 0$, where the inequality is due to the strict increasing of $R(P_{i, k}, \boldsymbol{P}_{i, -k})$, leading to positivity of $g(P_{i, k})$. Let $\lambda_{\tau_j, k}^{(q)} = 1/\check{\omega}_{j, k} - 1/\check{\omega}_{j + 1, k}$, if $\sum_{i=1}^{\tau_j} (g^{-1 (q)}(1/\omega_{i, k}^{(q)}) L_i) = E_{\tau_j,k}^a$, or let $\mu_{\tau_j, k}^{(q)} = 1/\check{\omega}_{j + 1, k} - 1/\check{\omega}_{j, k}$ if $\sum_{i=1}^{\tau_j} (g^{-1 (q)}(1/\omega_{i, k}^{(q)}) L_i) = E_{\tau_j,k}^c$, $\forall j = 1, \ldots, J$. We have shown that the water-level $\check{\omega}_{j + 1, k} > \check{\omega}_{j, k}$ if the causality constraint is tight at $t_{\tau_j}$, or $\check{\omega}_{j + 1, k} < \check{\omega}_{j, k}$ if the non-overflow constraint is tight at $t_{\tau_j}$. It then readily follows that $\lambda_{\tau_j, k}^{(q)} > 0$ or $\mu_{\tau_j, k}^{(q)} > 0$, depending which type of constraint is tight at $t_{\tau_j}$. Except for these $J + 1$ positive $\lambda_{N, k}^{(q)}$ and $\lambda_{j, k}^{(q)}$ or $\mu_{j, k}^{(q)}$, all other Lagrange multipliers in $\boldsymbol{\Lambda}^{(q)}_k$ are set to zero.

With such a $\boldsymbol{\Lambda}^{(q)}_k$, the complementary slackness conditions \eqref{eq.kkt2r}--\eqref{eq.kkt3r} clearly hold. Using such a $\boldsymbol{\Lambda}^{(q)}_k$ also leads to $\theta_{i, k}^{(q)} = \sum_{n = i}^N \lambda_{n, k}^{(q)} - \sum_{n = i}^{N} \mu_{n, k}^{(q)} = 1/\check{\omega}_{j, k}$, $\forall i \in [\tau_{j - 1} + 1, \tau_{j}]$ (with $\tau_0 := 1$ and $\tau_{J + 1} := N$). This implies that $P_{i,k}^{(q)} = g^{-1 (q)}(1/\check{\omega}_{j, k}) = \arg \max_{P_{i,k} \geq 0}{[R^{(q)}(P_{i,k}, \boldsymbol{P}_{i, -k}) - \theta_{i, k}^{(q)} P_{i, k}]}$, $\forall i \in [\tau_{j - 1} + 1, \tau_{j}]$. Thus, $\{ P_{i, k}^{(q)}, i = 1, \ldots, N \}$ follows the structure in Lemma~2 and it is guaranteed that each $P_{i, k}^{(q)}$ satisfies \eqref{eq.kkt1r}.

We have proven that $P_{i, k}^{(q)}, i = 1, \ldots, N$ yielded by Algorithm~1 and the Lagrange multipliers $\boldsymbol{\Lambda}^{(q)}_k$ constructed accordingly, satisfy the sufficient and necessary optimality conditions \eqref{eq.kkt1r}--\eqref{eq.kkt3r}. It readily follows that $\mathcal{P}_k^{(q)} := \{ P_{i, k}^{(q)}, i = 1, \ldots, N \}$ is a global optimal policy for the equivalent point-to-point link between user $k$ and the access point corresponding to iteration $q$.

In the search of a water-level changing point and the associated water-level (thus power) per iteration $q$, at most $N$ energy arrival times need to be tested. Clearly, we need to search at most $N$ water-level points in the worst case; hence, the proposed Algorithm~1 can compute $P_{i, k}^{(q)}, i = 1, \ldots, N$ with a complexity $\mathcal{O}(N)$.


\begin{thebibliography}{99}

\bibitem{Nan14}  Li, W., Nan, Z., Wang, X., Zhou, X.: `Optimal transmission policy for energy-harvesting powered MIMO multi-access channels'. {\em Proc. ICCC}, Shanghai, China, Oct. 2014, pp. 1--6

\bibitem{Lei09} Lei, J., Yates, R., Greenstein, L.: `A generic model for optimizing single-hop transmission policy of replenishable sensors', {\em IEEE Trans. Wireless Commun.}, 2009, 8, (2), pp. 547--551

\bibitem{Gatzianas10} Gatzianas, M., Georgiadis, L., Tassiulas, L.: `Control of wireless networks with rechargeable batteries', {\em IEEE Trans. Wireless Commun.}, 2010, 9, (2), pp. 581--593

\bibitem{Kansal07} Kansal, A., Hsu, J., Zahedi, S., Srivastava, M. B.: `Power management in energy harvesting sensor networks', {\em ACM Trans. Embed. Comput. Syst.}, 2007, 6, (4), pp. 1--38

\bibitem{ChenMag15} Chen, X., Ni, W., Wang, X., Sun, Y.: `Provisioning quality-of-service to energy harvesting wireless communications', {\em IEEE Commun. Mag.}, 2015, 53, (4), pp. 102--109

\bibitem{Fang15} Fang, Z., Yuan, X., Wang, X.: `Distributed energy beamforming for simultaneous wireless information and power transfer in the two-way relay channel', {\em IEEE Signal Processing Letters}, 2015, 22, (6), pp. 656--660

\bibitem{Feng15} Wang, F., Peng, T., Huang, Y., Wang, X.: `Robust transceiver optimization for power-splitting based downlink MISO SWIPT systems', {\em IEEE Signal Processing Letters}, 2015, 22, (9), pp. 1492--1496

\bibitem{Yan12} Yang, J., Ulukus, S.: `Optimal packet scheduling in an energy harvesting communication system', {\em IEEE Trans. Commun.}, 2012, 60, (1), pp. 220--230

\bibitem{Sharma10} Sharma, V., Mukherji, U., Joseph, V., Gupta, S.: `Optimal energy management policies for energy harvesting sensor nodes', {\em IEEE Trans. Wireless Commun.}, 2010, 9, (4), pp. 1326--1336

\bibitem{Ho12} Ho, C., Zhang, R.: `Optimal energy allocation for wireless communications with energy harvesting constraints', {\em IEEE Trans. Signal Process.}, 2012, 60, (9), pp. 4808--4818

\bibitem{Tut12} Tutuncuoglu, K., Yener, A.: `Optimum transmission policies for battery limited energy harvesting nodes', {\em IEEE Trans. Wireless Commun.}, 2012, 11, (3), pp. 1180--1189

\bibitem{Oze11} Ozel, O., Tutuncuoglu, K., Jing, Y., Ulukus, S., Yener, A.: `Transmission with energy harvesting nodes in fading wireless channels: Optimal policies', {\em IEEE J. Sel. Areas Commun.}, 2011, 29, (8), pp. 1732--1743

\bibitem{Chen16} Chen, X., Ni, W., Wang, X., Sun, Y.: `Optimal quality-of-service scheduling for energy-harvesting powered wireless communications', {\em IEEE Trans. Wireless Commun.}, 2016, 15, (5), pp. 3269--3280

\bibitem{SECON14} Wang, X., Zhang, R.: `Optimal transmission policies for energy harvesting node with non-ideal circuit power'. {\em Proc. SECON}, Singapore, Jun. 2014, pp. 1--9

\bibitem{Ant11} Antepli, M., Uysal-Biyikoglu, E., Erkal, H.: `Optimal packet scheduling on an energy harvesting broadcast link', {\em IEEE J. Sel. Areas Commun.}, 2011, 29, (8), pp. 1721--1731

\bibitem{Yang12} Yang, J., Ozel, O., Ulukus, S.: `Broadcasting with an energy harvesting rechargeable transmitter', {\em IEEE Trans. Wireless Commun.}, 2012, 11, (2), pp. 571--583

\bibitem{Oze12} Ozel, O., Yang, J., Ulukus, S.: `Optimal broadcast scheduling for an energy harvesting rechargeable transmitter with a finite capacity battery', {\em IEEE Trans. Wireless Commun.}, 2012, 11, (6), pp. 2193--2203

\bibitem{Wang14} Wang, X., Nan, Z., Chen, T.: `Optimal MIMO broadcasting for energy harvesting transmitter with non-ideal circuit power consumption', {\em IEEE Trans. Wireless Commun.}, 2015, 14, (5), pp. 2500--2512


\bibitem{Tutuncuoglu12} Tutuncuoglu, K., Yener, A.: `The energy harvesting multiple access channel with energy storage losses'. {\em Proc. ITW}, Lausanne, Switzerland, Sep. 2012, pp. 1--5

\bibitem{Ya12} Yang, J., Ulukus, S.: `Optimal packet scheduling in a multiple access channel with energy harvesting transmitters', {\em J. Commun. Netw.}, 2012, 14, (2), pp. 140--150

\bibitem{Xu13} Xu, J., Zhang, R.: `Throughput optimal policies for energy harvesting wireless transmitters with non-ideal circuit power', {\em IEEE J. Sel. Areas Commun.}, 2014, 32, (2), pp. 322--332





%



%
%










\bibitem{convex} Boyd, S., Vandenberghe, L.: '{\em Convex Optimization}' (Cambridge University Press, 2004)

\bibitem{Zafer&Modiano2005} Zafer, M., Modiano, E.: `A calculus approach to minimum energy transmission policies with quality of service guarantees'. {\em Proc. INFOCOM}, Miami, USA, Mar. 2005, pp. 548--559

\bibitem{Wang12} Wang, X., Li, Z.: `Energy-efficient transmissions of bursty data packets with strict deadlines over time-varying wireless channels', {\em IEEE Trans. Wireless Commun.}, 2013, 12, (5), pp. 2533--2543

\bibitem{Nan16} Nan, Z., Chen, T., Wang, X., Ni, W.: `Energy-efficient transmission schedule for delay-limited bursty data arrivals under non-ideal circuit power consumption', {\em IEEE Trans. Vehicular Tech.}, 2016, 65, (8), pp. 6588--6600

\bibitem{Hu15} Wang, X., Zhang, Y., Giannakis, G. B., Hu, S.: `Robust smart-grid powered cooperative multipoint systems', {\em IEEE Trans. Wireless Commun.}, 2015, 14, (11), pp. 6188--6199

\bibitem{Hu16} Hu, S., Zhang, Y., Wang, X., Giannakis, G. B.: `Weighted sum-rate maximization for MIMO downlink systems powered by renewables', {\em IEEE Trans. Wireless Commun.}, 2016, 15, (8), pp. 5615--5625

\bibitem{ChenMag17} Chen, X., Ni, W., Chen, T., Collings, I. B., Wang, X., Giannakis, G. B.: `Real-time energy trading and future planning for fifth-generation wireless communications', {\em IEEE Wireless Commun. Mag.}, 2017, 24, (4), pp. 24--30

\bibitem{JSAC16_2} Wang, X., Zhang, Y., Chen, T., Giannakis, G. B.: `Dynamic energy management for smart-grid powered coordinated multipoint systems', {\em IEEE J. Sel. Areas Commun.}, 2016, 34, (5), pp. 1348--1359

\bibitem{Wang17} Wang, X., Chen, X., Chen, T., Huang, L., Giannakis, G. B.: `Two-scale stochastic control for integrated multipoint communication systems with renewables', {\em IEEE Trans. Smart Grid}, 2017, 8, pp. 1--13

\bibitem{JSAC16_1} Wang, X., Chen, T., Chen, X., Zhou, X., Giannakis, G. B.: `Dynamic resource allocation for smart-grid powered MIMO downlink transmissions', {\em IEEE J. Sel. Areas Commun.}, 2016, 34, (12), pp. 3354--3365

\end{thebibliography}
\end{document}